\numberwithin{equation}{section}
\numberwithin{figure}{section}
\theoremstyle{plain}
\newtheorem{theorem}{Theorem}[section]
\newtheorem{proposition}[theorem]{Proposition}
\newtheorem{corollary}[theorem]{Corollary}
\theoremstyle{definition}
\newtheorem{definition}[theorem]{Definition}
\newtheorem{example}[theorem]{Example}
\renewcommand{\d}{\mathrm{d}}
\newcommand{\E}{\mathbb{E}}
\newcommand{\Pro}{\mathbb{P}}
\title{Two-dimensional forward and backward transition rates}
\author{Theis Bathke\footnote{Carl von Ossietzky Universit\"{a}t Oldenburg, Institut f\"{u}r Mathematik, 26111 Oldenburg, Germany, theis.bathke@uni-oldenburg.de}, Marcus C.~Christiansen\footnote{Carl von Ossietzky Universit\"{a}t Oldenburg, Institut f\"{u}r Mathematik, 26111 Oldenburg, Germany, marcus.christiansen@uni-oldenburg.de}}
\date{\today
}
\begin{document}

\maketitle

\begin{abstract}
Forward transition rates were originally introduced with the aim to evaluate  life insurance liabilities  market-consistently. While this idea turned out to have its limitations, recent literature repurposes forward transition rates as a tool for avoiding Markov assumptions in the calculation of  life insurance reserves. While life insurance reserves are some form of conditional first-order moments, the calculation of conditional second-order moments needs an extension of the forward transition rate concept from one dimension to two dimensions.  Two-dimensional forward transition rates are also needed for the calculation of path-dependent life insurance cash-flows as they occur upon  contract modifications. Forward transition rates are designed for doing prospective calculations, and by a time-symmetric definition of so-called backward transition rates one can do retrospective calculations.
\end{abstract}

Keywords: life \& health insurance; non-Markov modelling;  prospective \& retrospective reserves; second-order moments; free-policy option

\section{Introduction}

Forward transition rates describe the expected number of future transitions conditional on the currently available information. If the current information is incomplete,  then backward transition rates serve as a proxy for the number of past transitions. However, forward and backward transition rates do not describe the inter-temporal dependency structure of two successive jump events, so they are unsuitable for the calculation of second-order moments or for dealing with path-dependent insurance cash-flows as they occur upon contract modifications. For this reason, this paper introduces two-dimensional forward and backward transition rates and explains their use in life insurance calculations.

The research on forward transition rates originally emerged from the desire to calculate market values for life insurance liabilities. %Miltersen \& Persson (2005)
% Miltersen, K., Persson, S.: Is mortality dead? Stochastic forward force of mortality rate determined by no arbitrage. Working paper, Norwegian School of Economics and Business Administration (2005). Available online at http://www.mathematik.uni-ulm.de/carfi/vortraege/downloads/DeadMort.pdf
Miltersen and Persson \cite{miltersen2005mortality}
suggested to define mortality rates implicitly in such a way that the classical actuarial formulas reproduce market values  instead of real-world expectations. They denoted these implicit mortality rates as forward mortality rates, inspired by the concept of forward interest rates from financial mathematics. %Norberg (2010)
%Norberg, R.: Forward mortality and other vital rates—are they the way forward? Insur. Math. Econ. 47, 105–112 (2010)
Norberg \cite{norberg2010forward} developed generalizations of the forward concept to multi-state life insurance but observed that the implicit definitions are lacking uniqueness. %Christiansen \& Niemeyer (2015)
% Christiansen, M., Niemeyer, A.: On the forward rate concept in multi-state life insurance. Finance Stoch. 19, 295–327 (2015)
 Christiansen and Niemeyer  \cite{christiansen2015forward} observed that the implicit definitions often do not have any solution at all. So it turned out that the implicit  concept has serious limitations. %Buchardt (2017) %Buchardt, K.: Kolmogorov’s forward PIDE and forward transition rates in life insurance. Scand. Actuar. J. 2017, 377–394 (2017)
 Buchard, Furrer and Steffensen \cite{buchardt2019forward} %\textcolor{red}{[Buchardt K, Furrer C, Steffensen M (2019) Forward transition rates. Financ Stochast 23(4):975–999]}
    suggested to overcome these limitations by introducing additional artificial states, but their concept is restricted to insurance contracts that have sojourn payments only and no transition payments.
 Buchardt \cite{buchardt2017kolmogorov} discarded  the implicit concept and presented an explicit definition that works for a doubly stochastic Markov framework. Buchardt actually drifts away from market-consistent valuation but without clearly mentioning.
 %
 %Christiansen (2021) %
% Christiansen, M.C., 2021. On the calculation of prospective and retrospective reserves in non-Markov models. European Actuarial Journal 11/2, 441-462. [
 Christiansen  \cite{christiansen2021calculation}
 discovered that Buchardt's approach can be extended to a general recipe for calculating real-world expectations in fully non-Markovian life insurance frameworks. Christiansen and Furrer \cite{christiansen2022extension} %\textcolor{red}{[Christiansen MC, Furrer C, 2022. Extension of as-if-Markov modeling to scaled payments. Working Paper]} 
 enriched this non-Markov concept with a change of measure technique that makes it possible to deal also with path-dependent insurance payments. To put it into a nutshell, we find two divergent concepts of forward transition rates in the actuarial literature.
  First, there is the implicit concept which aims to  reproduce market values, first suggested by  Miltersen and Persson  \cite{miltersen2005mortality}  and taken on by numerous further authors. Second, there are the explicit definitions of  Buchardt \cite{buchardt2017kolmogorov},  Christiansen  \cite{christiansen2021calculation}, and Christiansen and Furrer \cite{christiansen2022extension}  %\textcolor{red}{[Buchardt K, Furrer C, Steffensen M (2019) Forward transition rates. Financ Stochast 23(4):975–999]}
   that put the focus back on real-world expectations and use forward transition rates as a tool to cope with complex inter-temporal dependency structures in life insurance models. This paper follows the second line of thought.

The change of measure technique in Christiansen and Furrer \cite{christiansen2022extension}
 %\textcolor{red}{[Buchardt K, Furrer C, Steffensen M (2019) Forward transition rates. Financ Stochast 23(4):975–999]}
  transfers the complexity of path-dependent insurance cash-flows to auxiliary probabilistic models. This way the forward transition rates may stay one-dimensional, but each insurance cash-flow needs another probabilistic model. Our aim is to have one probabilistic model only, and we achieve that by expanding the concept of forward transition rates to two dimensions. The price for having just one probabilistic model is an increased numerical effort that comes  with the extra dimension.   So the approach of Christiansen and Furrer \cite{christiansen2022extension}  %\textcolor{red}{[Buchardt K, Furrer C, Steffensen M (2019) Forward transition rates. Financ Stochast 23(4):975–999]}
   is beneficial when many scenarios are calculated for one and the same cash-flow, whereas the results of this paper are favourable when many different cash-flows ought to be calculated. 

One-dimensional forward transition rates are designed as a tool for calculating expectations. Our two-dimensional concept can be used as a tool for calculating  second-order moments. The calculation of second-order moments in classical Markov models was first outlined in Hoem \cite{hoem1969markov}.
%Hoem, J.M. (1969). Markov chain models in life insurance. Bl\"{a}tter der Deutschen Gesellschaft f\"{u}r Versicherungsmathematik 9, 91-107.
Helwich \cite{helwich2008durational} presents general formulas for the calculation of variances in semi-Markov models.
%Helwich, M. (2008). Durational effects and non-smooth semi-Markov models in life insurance. Doctoral dissertation, University of Rostock, urn:nbn:de:gbv:28-diss2008-0056-4. http://rosdok.uni-rostock.de/
Calculation formulas for second-order moments in fully non-Markovian models do not exist yet in the literature. By introducing  two-dimensional forward and backward transition rates we help  to close that gap.

Markov modelling has a long tradition in life insurance. The classical concept to model the random pattern of states of the insured as a Markov process has been extended to  semi-Markov modelling and further sophisticated Markov structures by numerous authors. The problem with any kind of  Markov assumptions is that they come with model risk, while at the same time there is a lack of tools for quantifying the model error. This motivates the search for non-Markovian calculation techniques, and  this paper is a major step into that direction.

The paper is structured as follows:
Section \ref{Chap:2} introduces the fundamental life insurance modelling framework.
Section \ref{Chap:3 forward rates} introduces the definition of two-dimensional forward and backward transitions rates and develops a corresponding integral equation that generalizes Kolmogorov's forward equation.  For the latter integral equation  section \ref{chap:4 solutions} verifies the uniqueness of  solutions.
Section \ref{Section:ExpectationFormulas} then turns to the main purpose of two-dimensional forward and backward transitions rates, namely the calculation of certain conditional moments.
In Section \ref{Chap:5 prospective reserve} we explain the calculation of conditional variances. 
Section  \ref{Chap:7 retrospective calculation} illustrates the calculation of path-dependent cash-flows. Section \ref{SectionConclusion} concludes and gives an outlook on open research questions.

\section{Life insurance modelling framework}\label{Chap:2}

Let $(\Omega,\mathcal{A},\mathbb{P})$ be a probability space with a filtration $\mathbb{F}=(\mathcal{F}_t)_{t \geq 0}$. We consider an individual life insurance contract and describe the status of the individual insured by an adapted c\`{a}dl\`{a}g  jump process
\begin{align*}
Z=(Z(t))_{t\geq 0}
\end{align*}
on a finite state space $\mathcal{Z}$. Additionally, we set $Z_{0-}:=Z_0$. Throughout this paper we assume that we are currently at time $s \geq 0$. So the time interval $[0,s]$ represents to the past and the present, and the time interval $(s,\infty)$ represents the future. The real number $s \geq 0$ is arbitrary but fixed.
 Based on $Z$ we define state indicator processes $(I_i)_{i\in\mathcal{Z}}$ by
\begin{align*}
&I_i(t):=\mathds{1}_{\{Z(t)=i\}},\quad t \geq 0,
\end{align*}
  and transition counting processes $(N_{ij})_{i,j\in\mathcal{Z}:i\neq j}$ by
\begin{align*}
&N_{ij}(t):=\#\{u\in(0,t]:Z(u-)=i,Z(u)=j\},\quad t \geq 0.
\end{align*}
We generally assume that
\begin{align}
\mathbb{E}[N_{ij}(t)]<\infty,\quad t\geq 0\;,i,j\in\mathcal{Z},\; i\neq j,\label{eq: finite expected value of N}
\end{align}
which in particular implies that $Z$ has almost surely no explosions.
Let
\begin{align} \label{eq: DiagSprung} \begin{split}
N_{ii}(t)&:=-\sum_{j\in \mathcal{Z}}(N_{ij}(t)-N_{ij}(s)),\quad  t>s,  \\
N_{ii}(t)&:=-\sum_{j\in \mathcal{Z}}(N_{ji}(t)-N_{ji}(s)),\quad t\leq s.
\end{split}
\end{align}
This construction is made so that $N_{ii}(t)$ satisfies 
\begin{align*}
	\d N_{ii}=-\sum_{j\in\mathcal{Z}}\d N_{ij}, \quad t>s,\\
	\d N_{ii}=-\sum_{j\in\mathcal{Z}}\d N_{ji}, \quad  t\leq s,
\end{align*}
	 and is c\`{a}dl\`{a}g everywhere.
Definition \eqref{eq: DiagSprung} and many further definitions following below involve a case differentiation between $t>s$ and $t \leq s$. As time $s$ is fixed, we omit it in the notation, but one should keep in mind the dependence of many of our definitions on the parameter $s$.
The following equations show a useful direct link between the processes $(N_{ij})_{i,j\in\mathcal{Z}}$ and $(I_i)_{i\in\mathcal{Z}}$:
\begin{align}
I_i(t)=I_i(s)+\sum_{j\in \mathcal{Z}} \;\int\displaylimits_{(s,t]}  N_{ji}(\d u),\quad  t\geq s, i\in\mathcal{Z}\label{I_represent_by_N}.
\end{align} and
\begin{align}\label{RelationBetweenIandN}
	I_i(t)=I_i(s)+\sum_{j\in \mathcal{Z}} \;\int\displaylimits_{(t,s]}  N_{ij}(\d u),\quad s\geq t
	, i\in\mathcal{Z}.
\end{align}
The latter integrals and all following integrals in this paper are meant as path-wise Lebesgue-Stieltjes integrals.

The sigma-algebra $\mathcal{F}_s$ represents the available information at time $s$. In insurance practice, the insurer often uses a reduced information set $\mathcal{G}_s$ for actuarial evaluations.  In this paper we generally assume that
\begin{align}\label{mathcalGs}
\sigma(Z(s))\subseteq \mathcal{G}_s\subseteq \mathcal{F}_s.
\end{align}
The special case $\mathcal{G}_s= \sigma(Z(s))$ is known as the as-if-Markov model, since it uses information of Markov-type only.
The choice of $\mathcal{G}_s$ can be influenced by many factors. Some of these are listed here: \begin{itemize}
\item By cutting down the information used, one can  reduce the complexity of numerical calculations.
\item A lack of data for statistical inference may make it necessary to simplify the information model.
\item For some actuarial tasks it is sufficient to study mean portfolio values only, and then it is convenient to minimize the individual information.
\item Anti-discrimination regulation may be a limiting factor for the use of information, such as unisex calculation. % cf.~\cite{becker2017mathematik}???.
\item Data privacy regulation can restrict the information that the insurer actually gathers and stores. For example, the General Data Protection Regulation of the European Union gives individuals considerable  control and rights over their personal data.%, see \cite{eu-269-2014}.
\end{itemize}
Let $B$ be the insurance cash-flow of the individual life insurance contract, here assumed to be an adapted c\`{a}dl\`{a}g process with paths of finite variation. We assume that the insurance contract has a maximum contract horizon of $T$, which means that
\begin{align*}
 B(\d t) = 0 , \quad t  > T.
\end{align*}
Throughout this paper we assume that
\begin{align*}
 0 \leq  s \leq T < \infty .
\end{align*}
Let $\kappa$ be a c\`{a}dl\`{a}g function that describes the value development of a savings account. We assume that $\kappa$ is strictly positive so that the corresponding discounting function $1/\kappa$ exists. %Since we assumed that we are currently at time $s\geq 0$, the function
%\begin{align*}
 % \delta (u) := \frac{\kappa(s)}{\kappa(u)}, \quad u \geq 0,
%\end{align*}
%describes the current value of a payment of one monetary unit that happens at time $u$.
The random variable 
\begin{align}
Y^+:=\;\int\displaylimits_{(s,T]}\frac{\kappa(s)}{\kappa(u)} B(\d u)\label{def:DefOfY}
\end{align}
describes the discounted accumulated future cash-flow seen from time $s$, 
and the random variable
\begin{align}
Y^-:=\;\int\displaylimits_{[0,s]}\frac{\kappa(s)}{\kappa(u)}  B(\d u)\label{def:DefOfY-}
\end{align}
is the compounded accumulated past  cash-flow. In order to ensure integrability, we generally assume that  $1/\kappa(\cdot)$ is bounded on $[0,T]$.

Classical Markov modelling focusses on benefit payment functions that may depend on current states or current jumps of the insured but not on  past random events. This means that $B$ is supposed to have a so-called one-dimensional canonical representation.
\begin{definition}
A stochastic process $A$ is said to have a \emph{one-dimensional canonical representation} if  there exist real-valued functions $(A_{i})_{i}$ on  $[0,\infty)$ that generate  finite signed measures $A_i(\d u)$ and there exist measurable and bounded real functions $(a_{ij})_{ij:i\neq j}$ such that
\begin{align}
 A(t)=\sum_{i \in \mathcal{Z}} \;\int\displaylimits_{[0,t]} I_i(u^-) A_i(\d u)+\sum_{i,j\mathcal{Z} \atop i\neq j}\;\int\displaylimits_{[0,t]} a_{ij}(u) N_{ij}(\d u),\quad t\geq 0.\label{def:DefOfB}
\end{align}
\end{definition}
Insurance cash-flows that involve  contract modifications, such as a free-policy option, can not be brought into the form \eqref{def:DefOfB}. For this reason we need to allow for more complex structures.
\begin{definition}
A stochastic process $A$ is said to have a \emph{two-dimensional  canonical representation} if  there exist  real-valued functions $(A_{i})_{i}$ on  $[0,\infty)$ that generate  finite signed measures $A_i(\d u_1)$, real-valued functions  $(A_{ij})_{ij}$ on $[0,\infty)^2$  that generate finite signed measures $A_{ij}(\d u_1,\d u_2)$, and measurable and bounded real-valued functions $(a_{ikl})_{ikl:k\neq l}$, $(a_{ijkl})_{ijkl:i\neq j, k\neq l}$ on  $[0,\infty)^2$ such that
\begin{align}\begin{split}
A(t)&=\sum_{i,j \in \mathcal{Z}}\; \;\int\displaylimits_{[0,t]^2} I_i(u_1^-)I_j(u_2^-) A_{ij}(\d u_1, \d u_2)\\
&\quad + \sum_{\substack{i,k,l \in\mathcal{Z} \\k\neq l}} \;\int\displaylimits_{[0,t]^2} I_i(u_1^-)a_{ikl}(u_1,u_2) A_{i}(\d u_1)   N_{kl}(\d u_2)\\
&\quad +\sum_{\substack{i,j,k,l \in \mathcal{Z}\\i \neq j, k\neq l}}\;\int\displaylimits_{[0,t]^2]} a_{ijkl}(u_1,u_2)   N_{ij}(\d u_1) N_{kl}(\d u_2),\quad t_1,t_2 \geq 0.\label{def:DefOfB2}
\end{split}
\end{align}
\end{definition}
\begin{example}\label{ExampleSquaredCashFlow}
Suppose that $B$ is an insurance cash flow that has a one-dimensional canonical representation with respect to suitable functions $(B_i)_i$ and $(b_{ij})_{ij:i \neq j}$.  Then the squared process $B^2$ has a two-dimensional canonical representation:
\begin{align}\begin{split}
  B^2(t)   &=\sum_{i,j \in \mathcal{Z}} \;\int\displaylimits_{[0,t]^2} I_i(u_1^-)I_j(u_2^-) B_{i}(\d u_1) B_j(\d u_2)\\
&\quad + \sum_{\substack{i,k,l \in \mathcal{Z}\\ k\neq l}} \;\int\displaylimits_{[0,t]^2} 2I_i(u_1^-) b_{kl}(u_2) B_{i}(\d u_1)   N_{kl}(\d u_2)\\
&\quad +\sum_{\substack{i,j,k,l \in \mathcal{Z}\\i \neq j, k\neq l}}\;\int\displaylimits_{[0,t]^2} b_{ij}(u_1) b_{kl}(u_2)   N_{ij}(\d u_1) N_{kl}(\d u_2),\quad t\geq 0. \label{def:RepOfBsquared}
\end{split}\end{align}
In order to see this, use integration by parts according to Proposition \ref{prop: partial integration} and Fubini's theorem.
\end{example}
\begin{example}\label{ExampleFPOCashFlow}
Suppose that $\mathcal{Z}= \mathcal{S}  \times \{0,1\}$, where the elements of $\mathcal{S}$ describe the health status of the individual insured and $\{0,1\}$ indicates whether the policyholder has exercised a free-policy option. We assume here that the free-policy option can be exercised only once and that the policy cannot move back to a premium-paying status. Let $\tau$ be the random time where the free-policy option is actually exercised, i.e.~$\tau$ gives the time where $Z$ moves from the set $\mathcal{S}_0:=\mathcal{S} \times \{0\}$ to the set $\mathcal{S}_1:=\mathcal{S}\times \{1\}$.  At time $\tau$ the insurance payment scheme is rescaled by a factor $\rho(\tau,Z_{\tau^-},Z_\tau)$ in order to maintain actuarial equivalence, cf. \cite{christiansen2022extension}.  So, by writing $C$ for the insurance payment scheme, the insurance cash-flow $ B$ equals
\begin{align*}
 B( t) = \;\int\displaylimits_{[0,t]} \rho(\tau,Z_{\tau^-},Z_\tau)^{\mathds{1}_{\{\tau \leq u\}}}  C(\d u).
\end{align*}
We assume  that $C$ has a finite horizon of $T$. For the sake of simplicity, let
\begin{align}\label{CiZeroAtTau}
\mathds{1}_{\{\tau=u\}} C(\d u)=0,\quad\forall i\in\mathcal{Z},
\end{align}
almost surely, so that we almost surely have no lump sum payments at time $\tau$.
Suppose that the payment scheme $C$  has a one-dimensional canonical representation  with respect to suitable functions $(C_i)_i$ and $(c_{ij})_{ij:i \neq j}$. The cash-flow $ B$ can be decomposed to
\begin{align}\begin{split}
	 B( t) =& \;\int\displaylimits_{[0,t]}\mathds{1}_{\{u<\tau\}}   C(\d u)+\;\int\displaylimits_{[0,t]}\mathds{1}_{\{u\geq\tau\}} \rho(\tau,Z_{\tau^-},Z_\tau) C(\d u).\end{split}\label{eq:FirstDecompositionFreePolicyOption}
\end{align}
We will now analyse both integrals separately. Beginning with the first one, we have
\begin{align*}
\int\displaylimits_{[0,t]}\mathds{1}_{\{u<\tau\}}  C(\d u)&=\sum_{i\in \mathcal{S}_0}\;\int\displaylimits_{[0,t]}\mathds{1}_{\{u<\tau\}} I_i(u^-) C_i(\d u)+\sum_{\substack{i,j\in \mathcal{S}_0\\i\neq j}}\;\int\displaylimits_{[0,t]}\mathds{1}_{\{u<\tau\}}c_{ij}(u)  N_{ij}(\d u)
\end{align*}
since $\mathds{1}_{\{u<\tau\}}I_i(u^-)=0$ and  $\mathds{1}_{\{u<\tau\}}N_{ij}(\d u)=0$ for all $(i,j)\not\in \mathcal{S}_0^2$.
Because of \eqref{CiZeroAtTau} and the fact that  $\mathds{1}_{\{u\leq \tau\}}I_i(u^-)=I_i(u^-)$ and  $\mathds{1}_{\{u<\tau\}}N_{ij}(\d u)=N_{ij}(\d u)$ for $i,j\in \mathcal{S}_0$, we furthermore get
\begin{align*}
\int\displaylimits_{[0,t]}\mathds{1}_{\{u<\tau\}}  C(\d u)=\sum_{i\in \mathcal{S}_0}\;\int\displaylimits_{[0,t]} I_i(u^-) C_i(\d u)+\sum_{\substack{i,j\in \mathcal{S}_0\\i\neq j}}\;\int\displaylimits_{[0,t]}c_{ij}(u)  N_{ij}(\d u)
\end{align*}
almost surely.
So the insurance cash-flow prior to $\tau$ almost surely has a one-dimensional canonical representation. Since $\tau$ is the unique jump time of the counting process $ \sum_{k \in \mathcal{S}_0}\sum_{l \in \mathcal{S}_1} N_{kl}$, by using assumption \eqref{CiZeroAtTau} the second integral in \eqref{eq:FirstDecompositionFreePolicyOption} can be almost surely transformed to
\begin{align*}
&\int\displaylimits_{[0,t]}\mathds{1}_{\{u\geq\tau\}} \rho(\tau,Z_{\tau^-},Z_\tau) C(\d u) \\
&=  \sum_{k \in \mathcal{S}_0}\sum_{l \in \mathcal{S}_1} \;\int\displaylimits_{[0,t]^2} {\mathds{1}_{\{u_1 \geq u_2\}}}\rho(u_2,k,l)  C(\d u_1)  N_{kl}(\d u_2)\\
&=  \sum_{k \in \mathcal{S}_0}\sum_{i,l \in \mathcal{S}_1} \;\int\displaylimits_{[0,t]^2} I_i(u_1^-)\rho(u_2,k,l) C_{i}(\d u_1)   N_{kl}(\d u_2)\\
&\quad +\sum_{k \in \mathcal{S}_0}\sum_{\substack{i,j,l\in \mathcal{S}_1\\i \neq j}}\;\int\displaylimits_{[0,t]^2} \rho(u_2,k,l) c_{ij}(u_1)    N_{ij}(\d u_1)  N_{kl}(\d u_2).
\end{align*}
So, at and after time $\tau$ the insurance cash-flow almost surely has a two-dimensional canonical representation.  All in all,  we obtain for the insurance cash-flow $B$ the almost sure representation
\begin{align}\begin{split}
	 B(t)&=\sum_{i\in \mathcal{S}_0}\;\int\displaylimits_{[0,t]} I_i(u^-) C_i(\d u)+\sum_{\substack{k,l\in \mathcal{S}_0\\k\neq l}}\;\int\displaylimits_{[0,t]}c_{kl}(u)  N_{kl}(\d u)\\
	 &\quad +  \sum_{k \in \mathcal{S}_0}\sum_{i,l \in \mathcal{S}_1} \;\int\displaylimits_{[0,t]^2} I_i(u_1^-)\rho(u_2,k,l) C_{i}(\d u_1)   N_{kl}(\d u_2)\\
	 &\quad +\sum_{k \in \mathcal{S}_0}\sum_{\substack{i,j,l\in \mathcal{S}_1\\i \neq j}}\;\int\displaylimits_{[0,t]^2} \rho(u_2,k,l) c_{ij}(u_1)    N_{ij}(\d u_1)  N_{kl}(\d u_2), \quad t \geq 0.\label{eq: free-policy-option-cash-flow}
	 \end{split}
	\end{align}
%It is important, that
%\begin{align*}
%	\int\displaylimits_{(s,T]}   B(\d u)&=\sum_{i\in \mathcal{S}_0}\;\int\displaylimits_{(s,T]} I_i(u^-) C_i(\d u)+\sum_{\substack{k,l\in \mathcal{S}_0\\k\neq l}}\;\int\displaylimits_{(s,T]}c_{kl}(u)  N_{kl}(\d u)\\
%&\quad +  \sum_{k \in \mathcal{S}_0}\sum_{i,l \in \mathcal{S}_1} \;\int\displaylimits_{(s,T]\times[0,T]} I_i(u_1^-)\rho(u_2,k,l) C_{i}(\d u_1)   N_{kl}(\d u_2)\\
%	&\quad +\sum_{k \in \mathcal{S}_0}\sum_{\substack{i,j,l\in \mathcal{S}_1\\i \neq j}}\;\int\displaylimits_{(s,T]\times[0,T]} \rho(u_2,k,l) c_{ij}(u_1)    N_{ij}(\d u_1)  N_{kl}(\d u_2).
%\end{align*} This means we can calculate the expected discounted future accumulated cash-flow of an insurance contract with free-policy option, by calculating the expected discounted future accumulated of a process with one-dimensional cash flow and a process with two-dimensional cash flow.
\end{example}

\section{Two-dimensional transition rates}\label{Chap:3 forward rates}

This section generalizes the forward and backward transition rates of \cite{christiansen2021calculation} from one dimension  to two dimensions.  We still suppose that we are currently at time $s$ and have the information $\mathcal{G}_s$ available.  As the parameter $s$ is  fixed we omit in the notation.

%\begin{definition}
Let $P_i=(P_i(t))_{t\geq 0}$ and $Q_{ij}=(Q_{ij}(t))_{t \geq 0}$ for $ i,j \in \mathcal{Z}$ be the almost surely unique c\`{a}dl\`{a}g  processes that  satisfy
\begin{align*}
  P_i(t)&=\E[ I_i(t) | \mathcal{G}_s],\\
  Q_{ij}(t) &= \E[ N_{ij}(t)| \mathcal{G}_s].
\end{align*}
 As we already mentioned after  definition \eqref{eq: DiagSprung}, the process $N_{ii}$ is  c\`{a}dl\`{a}g.

Let $P_{ik}= (P_{ik}(t_1,t_2))_{t_1,t_2 \geq 0}$ and $Q_{ijkl}=(Q_{ijkl}(t_1,t_2))_{t_1,t_2 \geq 0}$ for $i,j,k,l \in\mathcal{Z}$ be the almost surely unique random surfaces that are c\`{a}dl\`{a}g in each variable and satisfy
\begin{align*}
P_{ik}(t_1,t_2)&=\mathbb{E}[I_i(t_1)I_k(t_2)|\mathcal{G}_s],\\
Q_{ijkl}(t_1,t_2)&=\mathbb{E}[ N_{ij}(t_1)N_{kl}(t_2)|\mathcal{G}_s].
\end{align*}
%\end{definition}
%
In order to find suitable modifications of the conditional expectations so that the processes  have the postulated c\`{a}dl\`{a}g path properties, one can calculate the conditional expectations on the basis of a fixed regular conditional distribution $\Pro( \, \cdot \, | \mathcal{G}_s)$. Then
the c\`{a}dl\`{a}g path properties follow directly from the c\`{a}dl\`{a}g path properties of $(I_i)_i$ and $(N_{ij})_{ij}$ and the dominated convergence theorem. The c\`{a}dl\`{a}g path properties imply that the processes $P_i, Q_{ij}$ and the surfaces $P_{ik},Q_{ijkl}$ are uniquely given by their values at rational time points only, which are countably many, so the whole processes and surfaces are almost surely unique.

In the following the superscript '$\pm $' is a short notation for
\begin{align*}
	f(u^\pm):=\begin{cases}
		f(u^-)&:\, u>s,\\
		f(u)&:\, u\leq s,
	\end{cases}
\end{align*}
for c\`{a}dl\`{a}g functions $f$, and for any time points $s,t$ we define the symbol
\begin{align*}
	((s,t]]:=\begin{cases}
		(s,t] &:\,t>s,\\
		(t,s] &:\, t\leq s.
	\end{cases}
\end{align*}
Moreover, for the sake of brief formulas, we introduce the processes
 \begin{align*}
	\tilde{P}_{ij}(u):=\begin{cases}
		P_i(u) &:\,u>s,\\
		P_j(u) &:\,u\leq s,
	\end{cases}
\end{align*}
and
\begin{align*}
	\tilde{P}_{ijkl}(u_1,u_2):=\begin{cases}
		P_{ik}(u_1,u_2) &:\,u_1,u_2>s,\\
		P_{jk}(u_1,u_2) &:\,u_1\leq s,u_2>s,\\
		P_{il}(u_1,u_2) &:\,u_1> s,u_2\leq s,\\
		P_{jl}(u_1,u_2) &:\,u_1,u_2\leq s.
	\end{cases}
\end{align*}
\begin{definition}\label{DefLambda}
For $ i, j \in \mathcal{Z}$ let the stochastic process  $(\Lambda_{ij}(t))_{t \geq 0}$ be defined by
\begin{align}\label{DefEqLambda}
   \Lambda_{ij}(t) = \;\int\displaylimits_{((s,t]]} \frac{\mathds{1}_{\{\tilde{P}_{ij}(u^\pm)>0\}}}{\tilde{P}_{ij}(u^\pm)}  Q_{ij}(\d u),
\end{align}
and for $i,j,k,l \in \mathcal{Z}$ let the random surface   $(\Lambda_{ijkl}(t_1,t_2))_{t_1, t_2 \geq 0}$
be defined by
\begin{align}\label{DGL of g and Gamma}
 \Lambda_{ijkl}(t_1,t_2) = \;\int\displaylimits_{((s,t_1]]\times ((s,t_2]]} \frac{\mathds{1}_{\{\tilde{P}_{ijkl}(u_1^\pm,u_2^\pm)> 0\}}}{\tilde{P}_{ijkl}(u_1^\pm,u_2^\pm)}Q_{ijkl}(\d u_1,\d u_2).
\end{align}
\end{definition}
In case of $t_1,t_2>s$ we denote $ \Lambda_{ij}(\d t_1 )$ as (one-dimensional) forward transition rate and $\Lambda_{ijkl}(\d t_1,\d t_2)$ as two-dimensional forward transition rate. In case of $t_1,t_2\leq s$ we call them
backward transition rates.  In order to ensure existence of \eqref{DefEqLambda} and \eqref{DGL of g and Gamma}  we generally assume that
\begin{align}\label{IntegrabilCond}\begin{split}
	&\;\int\displaylimits_{[0,t]} \frac{\mathds{1}_{\{\tilde{P}_{ij}(t^\pm)>0\}}}{\tilde{P}_{ij}(t^\pm)}  Q_{ij}(\d t)  < \infty, \quad t \geq 0,\\
	&\;\int\displaylimits_{[0,t]^2} \frac{\mathds{1}_{\{\tilde{P}_{ijkl}(t_1^\pm,t_2^\pm)> 0\}}}{\tilde{P}_{ijkl}(t_1^\pm,t_2^\pm)}Q_{ijkl}(\d t_1,\d t_2)<\infty,\quad t \geq 0,
\end{split}\end{align}
almost surely for all $i,j,k,l\in\mathcal{Z}$.
According to \cite{christiansen2021calculation} it holds that
\begin{align}\label{eq: GenKolmForwardEquation}
P_i(t)=I_i(s)+\sum_{j\in\mathcal{Z}}\;\int\displaylimits_{((s,t]]} {P}_{j}(u^\pm)\tilde{\Lambda}_{ji}(\d u), \quad t \geq 0,
\end{align}
for all $i \in \mathcal{Z}$, where 
\begin{align*}
\tilde{\Lambda}_{ij}(u):=\begin{cases}
	\Lambda_{ij}(u)&: \,u>s,\\
	\Lambda_{ji}(u)&: \, u\leq s.
\end{cases}
\end{align*}
This is a generalization of  Kolmogorov's forward equation to non-Markov models.
We now extend \eqref{eq: GenKolmForwardEquation} from one to  two dimensions.
\begin{theorem}\label{theo: two dimensional Kolmogorov forward equation}
The processes $(P_{ik})_{i,k\in\mathcal{Z}}$, $(\Lambda_{ijkl})_{i,j,k,l\in\mathcal{Z}}$,$(P_{i})_{i\in\mathcal{Z}}$ and $(\Lambda_{ij})_{i,j\in\mathcal{Z}}$ almost surely satisfy the equation
\begin{align}\label{2dimKolmForwEq}\begin{split}
P_{ik}(t_1,t_2)&=I_i(s)I_k(s)
+I_k(s)\;\int\displaylimits_{((s,t_1]]}\sum_{j\in\mathcal{Z}
}P_{j}(u^\pm) \tilde{\Lambda}_{ji}(\d u) +I_i(s)\;\int\displaylimits_{((s,t_2]]}\sum_{l\in\mathcal{Z}
}P_{l}(u^\pm) \tilde{\Lambda}_{lk}(\d u)\\
&\quad +\;\int\displaylimits_{((s,t_1]]\times ((s,t_2]]}\sum_{\substack{j,l\in\mathcal{Z}}}{P}_{jl}(u_1^\pm,u_2^\pm) \tilde{\Lambda}_{jilk}(\d u_1, \d u_2)
\end{split}\end{align}
for $t_1,t_2 \geq 0$ and $i,k \in \mathcal{Z}$, where
\begin{align*}
	\tilde{\Lambda}_{ijkl}(u_1,u_2):=\begin{cases}
		\Lambda_{ijkl}(u_1,u_2)& u_1,u_2>s\\
		\Lambda_{jikl}(u_1,u_2)& u_1\leq s,u_2>s\\
		\Lambda_{ijlk}(u_1,u_2)& u_1>s,u_2\leq s\\
		\Lambda_{jilk}(u_1,u_2)& u_1,u_2\leq s.
	\end{cases}
\end{align*}
\end{theorem}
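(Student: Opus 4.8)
The plan is to derive the two-dimensional Kolmogorov forward equation \eqref{2dimKolmForwEq} directly from the definitions by expressing the product $I_i(t_1)I_k(t_2)$ through the counting processes, taking conditional expectations, and then substituting the forward/backward transition rates. I would treat the four cases for the location of $t_1,t_2$ relative to $s$ — both larger, both smaller, and the two mixed cases — but only carry out one representative case in full, say $t_1,t_2>s$, since the others are entirely analogous up to the bookkeeping encoded in the tilde-notation and the symbol $((s,t]]$.

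First I would start from the representation \eqref{I_represent_by_N} of $I_i$ in terms of $N_{ji}$, valid for $t\geq s$, and apply the integration-by-parts formula (Proposition~\ref{prop: partial integration}, referenced but not reproduced in the excerpt) to the product $I_i(t_1)I_k(t_2)$. This yields
\begin{align*}
I_i(t_1)I_k(t_2) &= I_i(s)I_k(s) + I_k(s)\sum_{j\in\mathcal{Z}}\int_{(s,t_1]}N_{ji}(\d u) + I_i(s)\sum_{l\in\mathcal{Z}}\int_{(s,t_2]}N_{lk}(\d u)\\
&\quad + \sum_{j,l\in\mathcal{Z}}\int_{(s,t_1]\times(s,t_2]}\, \text{(product increments of } N_{ji},N_{lk}\text{)},
\end{align*}
where the cross term is exactly $\int I_j(u_1^-)I_l(u_2^-)$-weighted increments that match the structure of $Q_{jilk}$ after conditioning. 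Then I would take $\E[\,\cdot\mid\mathcal{G}_s]$, pull $I_i(s),I_k(s)$ out since they are $\mathcal{G}_s$-measurable by \eqref{mathcalGs}, and recognize the single integrals as $\int P_j(u^\pm)\tilde\Lambda_{ji}(\d u)$ using \eqref{eq: GenKolmForwardEquation} (equivalently, $Q_{ji}(\d u)=P_j(u^\pm)\Lambda_{ji}(\d u)$ on the set where $\tilde P>0$, with the complementary null set handled as in \cite{christiansen2021calculation}), and the double integral as $\int P_{jl}(u_1^\pm,u_2^\pm)\Lambda_{jilk}(\d u_1,\d u_2)$ using the defining relation \eqref{DGL of g and Gamma}.

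The main obstacle is the double-integral term: one must justify that $\E[N_{ji}(\d u_1)N_{lk}(\d u_2)\mid\mathcal{G}_s]$, i.e.\ the measure $Q_{jilk}(\d u_1,\d u_2)$, can be rewritten as $P_{jl}(u_1^\pm,u_2^\pm)\tilde\Lambda_{jilk}(\d u_1,\d u_2)$, which requires care on the (random) set where $\tilde P_{jilk}(u_1^\pm,u_2^\pm)=0$. The key point is that on $\{\tilde P_{jilk}=0\}$ the measure $Q_{jilk}$ vanishes as well, because $0\le N_{ji}(t_1)N_{lk}(t_2)\le (\text{something})\cdot I_j(t_1^-)I_l(t_2^-)$-type domination forces $Q_{jilk}$ to be absolutely continuous with respect to $\tilde P_{jilk}\,\d(\text{counting measure})$ in an appropriate sense; this is the two-dimensional analogue of the one-dimensional argument in \cite{christiansen2021calculation} and I would invoke that lemma's reasoning rather than reprove it. A secondary technical point is Fubini/Tonelli for the iterated Lebesgue–Stieltjes integrals and the interchange of conditional expectation with the pathwise integrals, which is licensed by the finiteness assumptions \eqref{eq: finite expected value of N} and \eqref{IntegrabilCond} together with the regular-conditional-distribution construction described before Definition~\ref{DefLambda}.

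Finally, for the mixed cases ($u_1\le s<u_2$ and $u_1>s\ge u_2$) and the past case ($t_1,t_2\le s$) I would use \eqref{RelationBetweenIandN} in place of \eqref{I_represent_by_N} for whichever coordinate lies in $[0,s]$, note that the roles of the first and second index in $N$ and in $\Lambda$ swap exactly as recorded in the definitions of $\tilde\Lambda_{ij}$ and $\tilde\Lambda_{ijkl}$, and observe that the resulting identity is the same formula \eqref{2dimKolmForwEq} once the tilde-notation is unwound. Since $s$ is fixed and the four quadrants of $(t_1,t_2)$-space are handled separately but identically in structure, assembling them gives the claim for all $t_1,t_2\ge0$; uniqueness of the càdlàg-in-each-variable modification (already recorded in the paragraph preceding the theorem) then makes the identity hold almost surely simultaneously for all $t_1,t_2$.
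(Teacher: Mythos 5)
Your proposal follows essentially the same route as the paper: expand $I_i(t_1)I_k(t_2)$ via \eqref{I_represent_by_N}, take $\E[\,\cdot\mid\mathcal{G}_s]$ using \eqref{mathcalGs}, apply Campbell's theorem and Fubini under a regular conditional distribution, and then substitute the defining relations \eqref{DefEqLambda}, \eqref{DGL of g and Gamma}, handling the $\{\tilde P=0\}$ set by observing that $Q_{ijkl}$ vanishes there because $N_{ij}(\d u_1)=I_i(u_1^-)N_{ij}(\d u_1)$. One cosmetic note: the invocation of integration by parts (Proposition~\ref{prop: partial integration}) is unnecessary here, since $I_i(t_1)$ and $I_k(t_2)$ depend on different time variables, so the four-term decomposition is just the algebraic product of the two representations \eqref{I_represent_by_N} combined with Fubini, exactly as in the paper.
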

\begin{proof}
We show the proof for the case $t_1,t_2>s$ only. For the other cases the proof is similar.
 Let $\mathbb{P}_\omega(\cdot)$ be a regular version of the conditional distribution $\mathbb{P}(\cdot|\mathcal{G}_s)$ and let $\mathbb{E}_\omega[\cdot]$ be the Lebesgue-Stieltjes integral of the argument with respect to the integrator $\mathbb{P}_\omega$. By  applying Campbell's theorem, see section \ref{theo:Campbell}, and Fubini's theorem,  we get for $(a,b],(c,d]\subset [s,\infty)$ and almost all $\omega\in\Omega$
\begin{align}
\;\int\displaylimits_{(a,b]\times(c,d]}&\mathds{1}_{\{P_{ik}(u_1^-,u_2^-)(\omega)>0\}}Q_{ijkl}(\d(u_1,u_2))(\omega)\nonumber\\
&=\mathbb{E}_\omega\left[\;\int\displaylimits_{(a,b]\times(c,d]}\mathds{1}_{\{P_{ik}(u_1^-,u_2^-)(\omega)>0\}} N_{ij}(\d u_1) N_{kl}(\d u_2)\right]\nonumber\\
%&=\mathbb{E}_\omega\left[\;\int\displaylimits_{(a,b]\times(c,d]}\mathds{1}_{\{P_{ik}(x^-,y^-)(\omega)\neq0\}}I_i(x^-)I_k(y^-)\d N_{ij}(x)\d N_{kl}(y)\right]\nonumber\\
&=\mathbb{E}_\omega\left[\;\int\displaylimits_{(a,b]\times(c,d]}I_i(u_1^-)I_k(u_2^-) N_{ij}(\d u_1) N_{kl}(\d u_1)\right]\nonumber\\
%&=\mathbb{E}_\omega\left[\;\int\displaylimits_{(a,b]\times(c,d]}\d N_{ij}(x)\d N_{kl}(y)\right]\nonumber\\
&=\;\int\displaylimits_{(a,b]\times(c,d]}Q_{ijkl}(\d(u_1,u_2))(\omega)\label{eq:indicator function in definition}
\end{align}
since $P_{ik}(u^-_1,u^-_2)(\omega)=0$ implies $I_i(u^-_1)I_k(u^-_2)=0$ $\mathbb{P}_\omega(\cdot)$-almost surely and  $ I_i(u^-_1) N_{ij}(\d u_1) = N_{ij}(\d u_1)$.
By applying \eqref{I_represent_by_N}, we  get for $t_1,t_2\geq s$ and $i,k\in\mathcal{Z}$
\begin{align*}
P_{ik}(t_1,t_2)&=\mathbb{E}[I_i(t_1)I_k(t_2)|\mathcal{G}_s]\\
%&=\mathbb{E}\Big[\Big(I_k(s)+\;\int\displaylimits_{(s,x]}\sum_{l:l\neq k}(\d N_{lk}-\d N_{kl})\Big)\Big(I_i(s)+\;\int\displaylimits_{(s,y]}\sum_{j:j\neq i}(\d N_{ji}-\d N_{ij})\Big)\Big|\mathcal{G}_s\Big]\\
&=\mathbb{E}\Big[I_i(s)I_k(s)\Big|\mathcal{G}_s\Big]+\mathbb{E}\Big[ I_k(s)\;\int\displaylimits_{(s,t_1]}\sum_{\substack{j\in\mathcal{Z}}} N_{ji}(\d t)\Big|\mathcal{G}_s\Big]
+\mathbb{E}\Big[ I_i(s)\;\int\displaylimits_{(s,t_2]}\sum_{l\in\mathcal{Z}}N_{lk}(\d t)\Big|\mathcal{G}_s\Big]\\
&\quad+\mathbb{E}\Big[\sum_{\substack{j,l\in\mathcal{Z}}}\;\int\displaylimits_{(s,t_1]\times (s,t_2]}    N_{ji}(\d t_1) N_{lk}(\d t_2)\Big|\mathcal{G}_s\Big].
\end{align*}
By using the assumption \eqref{mathcalGs}, the definition of  $(Q_{ijkl})_{ijkl}$, the definition of $(Q_{ij})_{i,j}$, Fubini's theorem, and Campbell's theorem, we can conclude that
\begin{align*}
P_{ik}(t_1,t_2)&=I_i(s)I_k(s)+I_k(s)\;\int\displaylimits_{(s,t_1]}\sum_{\substack{j\in\mathcal{Z}}} Q_{ji}(\d t)+I_i(s)\;\int\displaylimits_{(s,t_2]}\sum_{l\in\mathcal{Z}} Q_{lk}(\d u)\\
&\quad+\sum_{j,l\in\mathcal{Z}}\quad\;\int\displaylimits_{(s,t_1]\times (s,t_2]} Q_{jilk}(\d(u_1,u_2))
\end{align*}
almost surely. The assertion follows now from \eqref{DGL of g and Gamma} and \eqref{eq:indicator function in definition}.
\end{proof}

\section{Uniqueness of solutions of the integral equations}\label{chap:4 solutions}

Equation  \eqref{eq: GenKolmForwardEquation} is commonly used for the calculation of the state occupation probabilities $(P_i)_i$ from given one-dimensional transition rates.  Likewise, equation \eqref{2dimKolmForwEq} may be used in order to calculate  the two-dimensional state occupation probabilities $(P_{ik})_{ik}$ from the one-dimensional and two-dimensional transition rates, but it is crucial then that $(P_{ik})_{ik}$  is the only solution.
\begin{theorem}\label{Theorem O ist kontraktion}
There exists an almost surely unique solution $(P_{i})_{i}$, $(P_{ik})_{ik}$  to the stochastic integral equation system formed by  equations \eqref{eq: GenKolmForwardEquation} and \eqref{2dimKolmForwEq}.
\end{theorem}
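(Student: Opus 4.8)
The plan is to note first that existence needs no separate argument: the conditional occupation probabilities $P_i(t)=\E[I_i(t)\mid\mathcal G_s]$ already solve \eqref{eq: GenKolmForwardEquation} by \cite{christiansen2021calculation}, and the surfaces $P_{ik}(t_1,t_2)=\E[I_i(t_1)I_k(t_2)\mid\mathcal G_s]$ solve \eqref{2dimKolmForwEq} by Theorem \ref{theo: two dimensional Kolmogorov forward equation}, so the whole content of the statement is uniqueness. I would argue pathwise for a fixed $\omega$ in the almost sure set on which \eqref{IntegrabilCond} holds and, exactly as in the discussion following the definitions of $P_i,Q_{ij},P_{ik},Q_{ijkl}$, reduce to rational time points so that a single null set can be discarded and the resulting uniqueness is uniform in $\omega$. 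It suffices to treat the forward range $t,t_1,t_2\ge s$; the backward range is its time-symmetric mirror image and the mixed ranges are combinations of the two. Throughout, the natural solution class is pairs of functions that are c\`{a}dl\`{a}g in each variable and locally bounded.

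The key structural observation is that the system \eqref{eq: GenKolmForwardEquation}--\eqref{2dimKolmForwEq} is triangular: \eqref{eq: GenKolmForwardEquation} is a closed linear Volterra equation in $(P_i)_i$ alone, whereas the right-hand side of \eqref{2dimKolmForwEq} depends on $(P_i)_i$ only through forcing terms that are pinned down once \eqref{eq: GenKolmForwardEquation} is solved, and on $(P_{ik})_{ik}$ through a genuinely two-dimensional Volterra integral. So I would split the uniqueness proof into two Gr\"onwall arguments. For the first, put $\nu_1(\d u):=\sum_{i,j\in\mathcal Z}\lvert\tilde\Lambda_{ij}\rvert(\d u)$; using the first line of \eqref{IntegrabilCond} and $\d N_{ii}=-\sum_j\d N_{ij}$ to control the diagonal terms, $\nu_1$ is a finite nonnegative right-continuous measure on $(s,T]$. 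If $(P_i)_i$ and $(P_i')_i$ both solve \eqref{eq: GenKolmForwardEquation}, their difference $\delta_i$ vanishes at $s$ and obeys $\max_i\lvert\delta_i(t)\rvert\le\int_{(s,t]}\max_j\lvert\delta_j(u^-)\rvert\,\nu_1(\d u)$ for $t>s$; the Gr\"onwall--Bellman inequality for Lebesgue--Stieltjes integrators (homogeneous case, finite integrator) then forces $\delta_i\equiv0$, so $(P_i)_i$ is unique.

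For the second step I would fix this unique $(P_i)_i$. For two solutions $(P_{ik})_{ik},(P_{ik}')_{ik}$ of \eqref{2dimKolmForwEq} sharing the same $(P_i)_i$, the first three right-hand-side terms cancel in the difference and one is left with the purely two-dimensional Volterra relation
\begin{align*}
\delta_{ik}(t_1,t_2)=\;\int\displaylimits_{(s,t_1]\times(s,t_2]}\sum_{j,l\in\mathcal Z}\delta_{jl}(u_1^-,u_2^-)\,\tilde\Lambda_{jilk}(\d u_1,\d u_2).
\end{align*}
With $\nu_2(\d(u_1,u_2)):=\sum_{a,b,c,d\in\mathcal Z}\lvert\tilde\Lambda_{abcd}\rvert(\d(u_1,u_2))$ — finite on $(s,T]^2$ by the second line of \eqref{IntegrabilCond}, again after the diagonal bookkeeping — and $g(t_1,t_2):=\max_{ik}\lvert\delta_{ik}(t_1,t_2)\rvert$, this gives $g\le Kg$ for the operator $Kg:=\int_{(s,\cdot_1]\times(s,\cdot_2]}g(u_1^-,u_2^-)\,\nu_2(\d u)$, hence $g\le K^ng$ for all $n$; estimating the $n$-fold nested integral by a Picard-type bound $\lVert K^ng\rVert_\infty\le\lVert g\rVert_\infty\,\nu_2((s,T]^2)^n/n!$ and letting $n\to\infty$ yields $g\equiv0$. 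Combining the two steps and running the symmetric and mixed-range variants completes the argument.

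The main obstacle I anticipate is Step 2: making the Picard-type $1/n!$ decay rigorous for the genuinely bivariate (non-product) control measure $\nu_2$, in particular in the presence of atoms, where one must either pass to the continuous/jump decomposition of $\nu_2$ or iterate more carefully. The accompanying nuisance is the bookkeeping that reduces finiteness of $\nu_1,\nu_2$ to \eqref{IntegrabilCond} through the diagonal processes $N_{ii}$ and the case split around $s$. A cleaner packaging that handles existence and uniqueness simultaneously — and that avoids worrying about the solution class — is to define the operator $O$ sending a pair $((P_i)_i,(P_{ik})_{ik})$ to the right-hand sides of \eqref{eq: GenKolmForwardEquation} and \eqref{2dimKolmForwEq}, equip the space of pairs of bounded c\`{a}dl\`{a}g functions on $[s,T]$ and $[s,T]^2$ with a weighted supremum norm of weight $\exp(-\beta V_1(t_1)-\beta V_1(t_2))$ for $V_1(t):=\nu_1((s,t])$ and $\beta$ large, and show that $O$ (or an iterate $O^n$) is a contraction, so that Banach's fixed point theorem applies; the triangular structure then guarantees that the unique fixed point is the pair of conditional expectations.
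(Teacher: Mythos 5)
Your overall strategy — existence from Theorem~\ref{theo: two dimensional Kolmogorov forward equation}, pathwise reduction, and a two-step triangular uniqueness argument (first \eqref{eq: GenKolmForwardEquation}, then \eqref{2dimKolmForwEq} with the unique $(P_i)_i$ inserted), quadrant by quadrant — is exactly the paper's. The interesting divergence is in how the two-dimensional contraction is established, and I'd flag one genuine gap in your ``cleaner packaging'' variant.

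Your primary route, iterating the operator $K$ and proving $\lVert K^n g\rVert_\infty\le\lVert g\rVert_\infty\,\nu_2((s,T]^2)^n/n!$, is in fact sound and can be closed without the continuous/jump decomposition you anticipate needing. The $n$-fold kernel integrates over $n$-tuples of points in $(s,t_1]\times(s,t_2]$ that are \emph{strictly} increasing in both coordinates simultaneously (strictness coming from the half-open intervals with $u^-$). Under the exchangeable product measure $\nu_2^{\otimes n}$, the $n!$ events of the form $\{w_1^{(\pi(1))}<\dots<w_1^{(\pi(n))},\ w_2^{(\pi(1))}<\dots<w_2^{(\pi(n))}\}$, one per permutation $\pi$, are pairwise disjoint (a strict order on the first coordinate determines $\pi$) and have equal mass, so each has $\nu_2^{\otimes n}$-measure at most $\nu_2(D)^n/n!$. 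Atoms cause no trouble because ties are excluded by the strict inequalities. This makes your Gr\"onwall/Picard argument rigorous as it stands, and it is genuinely more elementary than the paper's: the paper instead works in a total-variation-type norm on the space $BV_2^{|\mathcal Z|}$ of bivariate-BV surfaces, introduces a weight $e^{-K(\nu_1(u_1)+\nu_2(u_2))}$ built from the \emph{marginals of the bivariate control measure} $\mu\sim 4\sum\Lambda_{ijkl}$, and proves a one-shot contraction with constant $1/2$ by a copula argument (Sklar's theorem plus a supermodular-ordering inequality and the quantile representation). Your iteration trades the copula machinery for the combinatorial exchangeability estimate; the paper's approach trades the iteration for a sharper single-step bound.

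The gap is in the ``cleaner packaging.'' You propose a weighted sup-norm with weight $\exp(-\beta V_1(t_1)-\beta V_1(t_2))$ for $V_1(t):=\nu_1((s,t])$, where your $\nu_1$ is built from the \emph{one-dimensional} rates $\tilde\Lambda_{ij}$. That weight cannot tame the two-dimensional integral: the bivariate control measure $\nu_2$ generated by the $\tilde\Lambda_{ijkl}$ is not a priori dominated by the product $V_1\otimes V_1$, so $\nu_2$ may place mass where $V_1$ is flat and increasing $\beta$ does nothing for the operator norm. To make an exponential-weight contraction work in two dimensions the weight must be built from the marginals of the bivariate measure itself — precisely what the paper's $\nu_1,\nu_2$ are — and even then the quantitative bound on $\int e^{-K(\nu_1+\nu_2)}\d\nu$ requires the copula-type estimate, not just monotonicity of the exponential. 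So either run your Picard iteration to completion with the exchangeability bound, or switch the weight to the one the paper uses; the hybrid you sketch does not close.
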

\begin{proof}
The existence of a solution follows from Theorem \ref{theo: two dimensional Kolmogorov forward equation}, so it remains to show that the solution is almost surely unique. As the equations \eqref{eq: GenKolmForwardEquation} and \eqref{2dimKolmForwEq} are almost surely pathwise integral equations,  in the remaining proof we identify without loss of generality all stochastic processes and random surfaces with just one of their paths. These paths have to be chosen such that  \eqref{IntegrabilCond} holds.

 We show the proof for the case $t_1,t_2>s$ only. For the other cases the proof is similar.   We are going to use a fixed-point argument.
For any choice of $T\in (s,\infty)$, the set
\begin{align*}
BV_2^{|\mathcal{Z}|}:=\Big\{ f:[s,T]\times[s,T]&\rightarrow\mathbb{R}^{|\mathcal{Z}|^2}\Big|\text{ there exist} \text{ finite signed measures }\mu_{ik}\\
&\text{ with }f_{ik}(x,y)=\mu_{ik}([s,x]\times[s,y]),x,y\in [s,T]\times [s,T]\Big\}
\end{align*}
is a linear space.
The Jordan-Hahn decomposition offers for any finite signed measure a unique decomposition into a difference of two finite measures, and this decomposition can be furthermore used to decompose also any $f \in BV_2^{|\mathcal{Z}|}$ into a difference $f=f^+-f^-$ of nonnegative mappings $f^+,f^- \in BV_2^{|\mathcal{Z}|}$. Based on this unique construction of $f^+$ and $f^-$, we then define $|f|:= f^++f^-$. By equipping $BV_2^{|\mathcal{Z}|}$ with the norm
\begin{align*} \Vert (f_{ik})_{i,k\in\mathcal{Z}}\Vert:=\sum_{i,k\in\mathcal{Z}}\;\int\displaylimits_{[s,T]\times[s,T]}|f_{ik}|(\d t_1,\d t_2)+\sum_{i,k\in\mathcal{Z}}| f_{ik}(s,s)|
\end{align*}
we obtain a metric space.
On this metric space, we define an  operator $O:BV_2^{|\mathcal{Z}|}\rightarrow BV_2 ^{|\mathcal{Z}|}$ as follows:
\begin{align*}
\left(O((f_{jl})_{j,l\in\mathcal{Z}})\right)_{ik}(t_1,t_2):&=\sum_{{j,l\in\mathcal{Z}}}\quad\;\int\displaylimits_{(s,t_1]\times(s,t_2]}f_{jl}(u_1^-,u_2^-) \Lambda_{jilk}(\d u_1,\d u_2)
\end{align*}
for $t_1,t_2\in[s,T]$. We want this operator to be a contraction, but unfortunately this is not true, so we need to replace our norm for $BV_2^{|\mathcal{Z}|}$ by another equivalent norm.
Let
\begin{align*}
\nu(t_1,t_2):=4\sum_{\substack{i,j,k,l\in\mathcal{Z}\\i\neq j\\ k\neq l}}\Lambda_{ijkl}(t_1,t_2), \quad t_1,t_2 \in [s,T].
\end{align*}
Because of assumption \eqref{IntegrabilCond}, each $\Lambda_{ijkl}$, $i\neq j$, $k\neq l$, has an associated   measure ${\mu}_{\Lambda_{ijkl}}$ that is  almost surely finite on $[s,T]^2$. Hence, there exists a finite  measure $\mu$ that satisfies
\begin{align}
\nu(t_1,t_2)=\mu([s,t_1]\times[s,t_2]),\quad t_1,t_2\in[s,T]\label{connection beetween mu and nu}.
\end{align}
We moreover define
\begin{align*}
\nu_1(u):&=\mu([s,u]\times[s,T]),\\
\nu_2(u):&=\mu([s,T]\times[s,u]), \quad u \in [s,T],
\end{align*}
which are c\`{a}dl\`{a}g by construction. For each $K \in (0,\infty)$  the mapping $\Vert\cdot\Vert_K$ defined by
\begin{align*}
\Vert (f_{ij})_{i,j\in\mathcal{Z}}\Vert_K:=\sum_{i,j\in\mathcal{Z}}	\;\int\displaylimits_{[s,T]\times[s,T]}e^{-K(\nu_1(u_1)+\nu_2(u_2))}|f_{ij}|(\d u_1,\d u_2)+\sum_{i,k\in\mathcal{Z}}| f_{ik}(s,s)|
\end{align*}
is a norm on $BV_2^{|\mathcal{Z}|}$ that is equivalent to the norm $\| \cdot \|$. This construction of $\Vert\cdot\Vert_K$ is inspired by \cite{christiansen2010biometric} but is here extended to the two-dimensional case.
For any  $f \in BV_2^{|\mathcal{Z}|}$,  $(a,b]\times (c,d]\subset[s,T]\times[s,T]$ and $k,i\in\mathcal{Z}$, the definitions of operator $O$ and the two-dimensional transition rates $(\Lambda_{ijkl})_{i,j,k,l\in\mathcal{Z}}$ in conjunction with the  triangle inequality yield that
\begin{align*}
\int\displaylimits_{(a,b]\times(c,d]}|O(f)_{ik}|(\d u_1,\d u_2)
\leq \sum_{\substack{l:l\neq k\\j:j\neq i}}\bigg(&
\;\int\displaylimits_{(a,b]\times(c,d]}|f_{jl}(u_1^-,u_2^-)|\Lambda_{jilk}(\d u_1,\d u_2)\\
&\quad+\;\int\displaylimits_{(a,b]\times(c,d]}|f_{jk}(u_1^-,u_2^-)|\Lambda_{jikl}(\d u_1,\d u_2)\\
&\quad+\;\int\displaylimits_{(a,b]\times(c,d]}|f_{il}(u_1^-,u_2^-)|\Lambda_{ijlk}(\d u_1,\d u_2)\\
&\quad+\;\int\displaylimits_{(a,b]\times(c,d]}|f_{ik}(u_1^-,u_2^-)|\Lambda_{ijkl}(\d u_1, \d u_2)\bigg).
\end{align*}
Summation over $i,k$ and a reordering of some of the resulting sums lead to the inequality
\begin{align*}
\sum_{i,k} \int\displaylimits_{(a,b]\times(c,d]}|O(f)_{ik}|(\d u_1,\d u_2)
&\leq 4\sum_{\substack{i,j,k,l\\i \neq j, k \neq l} } \;\int\displaylimits_{(a,b]\times (c,d]}|f_{ik}(u_1^-,u_2^-)| \Lambda_{ijkl}(\d u_1,\d u_2)\\
&\leq 4 \sum_{i,k}\;\int\displaylimits_{(a,b]\times (c,d]}|f_{ik}(u_1^-,u_2^-)|\, \nu(\d u_1,\d u_2).
\end{align*}
Suppose now that $f(s,s)$ is zero. Then it holds that
\begin{align*}
\sum_{i,k} \int\displaylimits_{(a,b]\times(c,d]}|O(f)_{ik}|(\d u_1,\d u_2)
&\leq 4 \sum_{k,i}\;\int\displaylimits_{(a,b]\times (c,d]}\;\int\displaylimits_{[s,u_1)\times[s,u_2)}|f_{ik}|(\d r_1,\d r_2)\,\nu(\d t_1,\d t_2).
\end{align*}
As a consequence, the norm $\| \cdot \|_{K}$ of $O(f)$ has an upper bound of
 \begin{align} \label{Kontraktion zu vereinfachende Gleichung} \begin{split}
\Vert O(f)\Vert_{K}&=\sum_{i,k}	\;\int\displaylimits_{[s,T]\times[s,T]}e^{-K(\nu_1(u_1)+\nu_2(u_2))}|O(f)_{ik}|(\d u_1,\d u_2)\\
&\leq\sum_{i,k}	\;\int\displaylimits_{[s,T]\times[s,T]}e^{-K(\nu_1(u_1)+\nu_2(u_2))}\;\int\displaylimits_{[s,u_1)\times[s,u_2)}|f_{ik}|(\d r_1,\d r_2)\,\nu(\d u_1,\d u_2)\\
&=\sum_{i,k}	\;\int\displaylimits_{[s,T]\times[s,T]}\;\int\displaylimits_{(r_1,T]\times(r_2,T]}e^{-K(\nu_1(u_1)+\nu_2(u_2))}\nu(\d u_1,\d u_2))|f_{ik}|(\d r_1,\d r_2),
\end{split}\end{align}
where the last equality uses Fubini's theorem.   Moreover, for  arbitrary but fixed $u_1,u_2 \in [s,T]$ let
\begin{align*}
\tilde{\nu}(r_1,r_2)&:=\frac{\mu((u_1,r_1]\times(u_2,r_2])}{\mu((u_1,T]\times(u_2,T])}, \quad  r_1 \in (u_1,T], \, r_2 \in (u_2,T],
\end{align*}
for the same $\mu$ is in equation \eqref{connection beetween mu and nu}.
Without loss of generality we assume that $\mu((u_1,T]\times(u_2,T]) >0$. Otherwise the conclusion that we want to draw is trivial.  Then
\begin{align*}
\tilde\nu_1(r_1)&:=\tilde\nu(r_1,T),\\
\tilde\nu_2(r_2)&:=\tilde\nu(T,r_2),\quad (r_1,r_2)\in(u_1,T]\times(u_2,T],
\end{align*}
correspond to cumulative distribution functions. Let $C:=\tilde{\nu}(u_1,u_2)>0$  and let $(A,B)$ be a random vector that has $\tilde{\nu}$ as its two-dimensional cumulative distribution function. Then, by applying  Sklar's theorem, see e.g.~Theorem 2.3.3 in \cite{nelsen2007introduction}, we can show that
\begin{align}\label{Anwendung Copula theorem}\begin{split}
&\int\displaylimits_{(u_1,T]\times(u_2,T]}e^{-K(\nu_1(r_1)+\nu_2(r_2))}\nu(\d r_1,\d r_2)\\
&\quad \leq Ce^{-K(\nu_1(u_1)+\nu_2(u_2)}\;\int\displaylimits_{(u_1,T]\times (u_2,T]}e^{-CK(\tilde\nu_1(r_1)+\tilde\nu_2(r_2))}\tilde\nu(\d r_1,\d r_2)\\
&\quad =Ce^{-K(\nu_1(u_1)+\nu_2(u_2)}\,\mathbb{E}\left[e^{-CK(\tilde\nu_1(\tilde\nu_1^{-1}(U))+\tilde\nu_2(\tilde\nu_2^{-1}(V)))}\right]
\end{split}\end{align}
for a suitable random vector  $(U,V)$ whose components are uniformly distributed on $(0,1)$ and such that $(A,B)$ and $(\tilde\nu_1^{-1}(U),\tilde\nu_2^{-1}(V))$ have the same distribution.  Note here that the copula of $(U,V)$ may be non-trivial.
The inverse functions $\tilde\nu_1^{-1}$ and $\tilde\nu_2^{-1}$ are here defined as
$ \tilde\nu_n^{-1}(t):=\inf\{x:\tilde\nu_n(x)\geq t\}$, $n=1,2$ for $t\in (0,1)$.
Since $\tilde\nu_1(\tilde\nu_1^{-1}(t)) \geq t$ for $t \in (0,1)$, see e.g.~Theorem 3.1 in \cite{shorack2000probability},  the inequality \eqref{Anwendung Copula theorem} has an upper bound of
\begin{align*}
&\int\displaylimits_{(u_1,T]\times(u_2,T]}e^{-K(\nu_1(r_1)+\nu_2(r_2))}\nu(\d r_1,\d r_2) \leq Ce^{-K(\nu_1(u_1)+\nu_2(u_2)}\,\mathbb{E}\left[e^{-CK(U+V)}\right].
%\label{Anwendung von quantilsfunktion}
\end{align*}
By applying Theorem 10.6.4 in \cite{kaas2002modern}, we can show that the latter expectation has an upper bound of
\begin{align*}
\mathbb{E}\left[e^{-CK(U+V)}\right]\leq \mathbb{E}\left[e^{-CK(2U)}\right]
\end{align*}
since $x \mapsto \exp\{ - C Kx\}$ is a convex function. As $U$ is uniformly distributed on $(0,1)$, we moreover have
\begin{align*}
\mathbb{E}\left[e^{-CK(2U)}\right] &= \int\displaylimits_{(0,1)}e^{-CK(2u)}\d u\\
& = \frac{1}{2KC}\left(1-e^{-2KC}\right)\\
&\leq  \frac{1}{2KC}.
\end{align*}
We set $K=1$.  Then, all in all we can conclude that  the inequality \eqref{Kontraktion zu vereinfachende Gleichung} has an upper bound of
\begin{align*}
\Vert O(f)\Vert_{K} &\leq  \frac{1}{2K} \sum_{i,k}	\;\int\displaylimits_{[s,T]\times[s,T]}e^{-K(\nu_1(u_1)+\nu_2(u_2))}|f_{ik}|(\d u_1,\d u_2)\\
 &=  \frac{1}{2} \Vert f \Vert_{K}
\end{align*}
whenever $f(s,s)$ equals zero. Suppose now that we have two solutions $R=(P_{ik})_{ik}$ and $R'=(P'_{ik})_{ik}$ of \eqref{2dimKolmForwEq} for given $(P_{i})_{i}$. Then $R-R' \in BV_2^{|\mathcal{Z}|}$ is a fixed point of the operator $O$ and $(R-R')(s,s)$ is zero. So we  have
\begin{align*}
  \Vert R-R' \Vert_{K} \leq \frac{1}{2} \Vert R-R' \Vert_{K},
\end{align*}
which necessarily implies that $R-R'$ is zero on $[s,T]\times [s,T]$.  In an analogous way it is possible to proof $R=R'$  also on  the three rectangles $[0,s]^2$, $[0,s]\times (s,T]$ and $(s,T]\times[0,s]$. Since $T \in (s,\infty)$ was arbitrary, we can expand the uniqueness property to infinity.

With the same ideas that we used in this proof, one can also show the uniqueness of a solution $(P_{i})_{i\in\mathcal{Z}}$ of \eqref{eq: GenKolmForwardEquation} with respect to  $(\Lambda_{ij})_{i,j\in\mathcal{Z}}$. The one-dimensional case is actually even simpler.  The equation system formed by  equations \eqref{eq: GenKolmForwardEquation} and \eqref{2dimKolmForwEq} can then have only one solution  $(P_{i})_{i}$, $(P_{ik})_{ik}$. This completes the proof.
\end{proof}
\section{Conditional  expectations of canonical representations}\label{Section:ExpectationFormulas}

According to \cite{christiansen2021calculation}, for any stochastic process $A$ that has a one-dimensional canonical representation of the form \eqref{def:DefOfB},  it holds that
\begin{align}\label{ExplicFormulasV+} \begin{split}
		\mathbb{E}\bigg[\;\int\displaylimits_{(s,T]} A(\d t) \bigg|\mathcal{G}_s\bigg] &= \sum_{i\in\mathcal{Z}} \;\int\displaylimits_{(s,T]} P_i(t^-)  A_i(\d t)  + \sum_{i,j \in \mathcal{Z}\atop i \neq j} \;\int\displaylimits_{(s,T]}  a_{ij}(t)  P_i(t^-)   \Lambda_{ij}(\d t)
\end{split}\end{align} and \begin{align}\label{ExplicFormulasV-} \begin{split}
\mathbb{E}\bigg[\;\int\displaylimits_{[0,s]}  A(\d t) \bigg|\mathcal{G}_s\bigg] &= \sum_{i\in\mathcal{Z}} \;\int\displaylimits_{[0,s]}  P_i(t^-)  A_i(\d t)+ \sum_{i,j \in \mathcal{Z}\atop i \neq j} \int\displaylimits_{[0,s]}   a_{ij}(t)  P_j(t)   \Lambda_{ij}(\d t)
\end{split}\end{align}
almost surely. These formulas are classical in the life insurance literature in case that $Z$ is a Markov process, and the recent contribution of \cite{christiansen2021calculation} was to show that Markov assumptions are actually not needed here.
 The formulas  \eqref{ExplicFormulasV+} and \eqref{ExplicFormulasV-} are used in life insurance for the calculation of  prospective and retrospective reserves at time $s$.  They are typically applied as follows: For given one-dimensional transition rates,  calculate the corresponding one-dimensional state occupation probabilities by solving \eqref{eq: GenKolmForwardEquation}, and then solve the integrals in \eqref{ExplicFormulasV+} and \eqref{ExplicFormulasV-} in order to obtain the desired conditional expectations. The solving usually happens numerically.

 The following three propositions will allow us to generalize \eqref{ExplicFormulasV+} and \eqref{ExplicFormulasV-} to stochastic processes $A$ that have a two-dimensional canonical representation according to \eqref{def:DefOfB2}.

\begin{proposition}\label{theorem:ProspReserve1}
	Let $A_{ij}$, $i,j\in\mathcal{Z}$,  be a real-valued function on $[0,\infty)^2$  that generates a finite signed measure. Then we  almost surely have
	\begin{align*}
&\mathbb{E}\bigg[\int\displaylimits_{[0,T]^2} I_i(u_1^-)I_j(u_2^-)A_{ij}(\d u_1,\d u_2)\bigg|\mathcal{G}_s\bigg]=\int\displaylimits_{[0,T]^2} P_{ij}(u_1^-,u_2^-)A_{ij}(\d u_1,\d u_2).
	\end{align*}
\end{proposition}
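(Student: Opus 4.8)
The plan is to reduce the conditional expectation to an ordinary integral against a fixed regular conditional distribution and then interchange the order of integration, exploiting that $A_{ij}$ is a \emph{deterministic} finite signed measure. Let $\mathbb{P}_\omega(\cdot)$ be a regular version of $\mathbb{P}(\cdot\,|\,\mathcal{G}_s)$ and let $\mathbb{E}_\omega[\,\cdot\,]$ denote integration with respect to $\mathbb{P}_\omega$, as in the proof of Theorem \ref{theo: two dimensional Kolmogorov forward equation}. Following the construction described after the introduction of $P_{ik}$ and $Q_{ijkl}$ in Section \ref{Chap:3 forward rates}, I would fix the modification of $P_{ij}$ for which $P_{ij}(t_1,t_2)(\omega)=\mathbb{E}_\omega[I_i(t_1)I_j(t_2)]$ for all $t_1,t_2\ge 0$ and $\mathbb{P}$-almost all $\omega$; this modification is c\`adl\`ag in each variable. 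For such $\omega$ the left-hand side of the asserted identity equals $\mathbb{E}_\omega\big[\int_{[0,T]^2} I_i(u_1^-)I_j(u_2^-)\,A_{ij}(\d u_1,\d u_2)\big]$.

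Next I would apply Fubini's theorem to move $\mathbb{E}_\omega$ inside the integral $\int_{[0,T]^2}\,\cdot\;A_{ij}(\d u_1,\d u_2)$. This is justified because $|A_{ij}|([0,T]^2)<\infty$ by hypothesis, $\mathbb{P}_\omega$ is a probability measure, the integrand is bounded by $1$, and the map $(\omega,u_1,u_2)\mapsto I_i(u_1^-)(\omega)I_j(u_2^-)(\omega)$ is jointly measurable: each factor $(\omega,u)\mapsto I_i(u^-)(\omega)$ is adapted and left-continuous, hence predictable and in particular $\mathcal{A}\otimes\mathcal{B}([0,T])$-measurable, so the tensor product is measurable on $\Omega\times[0,T]^2$. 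The interchange produces $\int_{[0,T]^2}\mathbb{E}_\omega[I_i(u_1^-)I_j(u_2^-)]\,A_{ij}(\d u_1,\d u_2)$.

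It then remains to identify $\mathbb{E}_\omega[I_i(u_1^-)I_j(u_2^-)]$ with $P_{ij}(u_1^-,u_2^-)(\omega)$. For fixed $\omega'$ the map $(t_1,t_2)\mapsto I_i(t_1)(\omega')I_j(t_2)(\omega')$ has a left limit at $(u_1,u_2)$ equal to $I_i(u_1^-)(\omega')I_j(u_2^-)(\omega')$ because $I_i$ and $I_j$ are c\`adl\`ag; dominated convergence under $\mathbb{P}_\omega$ (again with dominating constant $1$) then gives $\mathbb{E}_\omega[I_i(t_1)I_j(t_2)]\to\mathbb{E}_\omega[I_i(u_1^-)I_j(u_2^-)]$ as $t_1\uparrow u_1$, $t_2\uparrow u_2$, while by the choice of modification the left-hand side is $P_{ij}(t_1,t_2)(\omega)\to P_{ij}(u_1^-,u_2^-)(\omega)$. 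Substituting this back into the previous display yields the claimed identity for $\mathbb{P}$-almost all $\omega$.

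The step I expect to require the most care is the combination of the Fubini interchange with the matching of the two versions of $P_{ij}$: one must work with the specific c\`adl\`ag-in-each-variable modification fixed in Section \ref{Chap:3 forward rates} rather than an arbitrary version of the conditional expectation, since both the interchange and the left-limit identification hold only up to a $\mathbb{P}$-null set. The remaining ingredients --- finiteness of $|A_{ij}|$ and boundedness of the state indicators --- are routine. Note finally that, in contrast to the transition-rate formulas \eqref{ExplicFormulasV+}--\eqref{ExplicFormulasV-}, no case distinction between past and future is needed here, because only state indicators and no counting processes enter the integrand.
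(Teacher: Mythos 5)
Your proof is correct and takes essentially the same route as the paper, which simply states that the assertion follows from Fubini's theorem. You supply the details --- working with a fixed regular conditional distribution, verifying joint measurability and boundedness for the Fubini interchange, and identifying left limits of the c\`{a}dl\`{a}g-in-each-variable modification via dominated convergence --- that the paper leaves implicit.
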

\begin{proof}
	The assertion follows from Fubini's theorem.
\end{proof}
\begin{proposition}\label{theorem:ProspReserve2}
	Let $A_{i}$, $i\in\mathcal{Z}$, be a real-valued function on $[0,\infty)$ that generates a  finite signed measure, and let $a_{ikl}$, $i,k,l\in\mathcal{Z}$, $k\neq l$,  be a measurable and bounded real-valued function on $[0,\infty)^2$.  Then we almost surely have
\begin{align*}
		&\mathbb{E}\bigg[\int\displaylimits_{[0,s]\times [0,T]} I_i(u_1^-)a_{ikl}(u_1,u_2)A_{i}(\d u_1)N_{kl}(\d u_2)\bigg|\mathcal{G}_s\bigg]\\
		&=I_i(s) \int\displaylimits_{[0,s]\times [0,T]} a_{ikl}(u_1,u_2) A_i(\d u_1) \tilde{P}_{kl}(u_2^\pm) \Lambda_{kl}(\d u_2)\\
		&\quad+\sum_{j\in\mathcal{Z}}\;
\int\displaylimits_{[0,s]} \;\int\displaylimits_{[0,T] \times [u_1,s]}  {a}_{ikl}(u_1,u_2)
\tilde{P}_{klij}(u_2^\pm,u_3) \Lambda_{klij}(\d u_2,\d u_3) A_i(\d u_1)
\end{align*}
and
\begin{align*}
		&\mathbb{E}\bigg[\int\displaylimits_{(s,T]\times [0,T]}I_i(u_1^-)a_{ikl}(u_1,u_2)A_{i}(\d u_1)N_{kl}(\d u_2)\bigg|\mathcal{G}_s\bigg]\\
		&=I_i(s) \int\displaylimits_{(s,T]\times [0,T]} a_{ikl}(u_1,u_2) A_i(\d u_1) \tilde{P}_{kl}(u_2^\pm) \Lambda_{kl}(\d u_2)\\
		&\quad+\sum_{j\in\mathcal{Z}}
\int\displaylimits_{(s,T]} \;\int\displaylimits_{[0,T] \times (s,u_1)}   {a}_{ikl}(u_1,u_2)
\tilde{P}_{klji}(u_2^\pm,u_3^-) \Lambda_{klji}(\d u_2,\d u_3) A_i(\d u_1).
\end{align*}
\end{proposition}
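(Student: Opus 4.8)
The plan is to condition and unfold the counting processes $N_{kl}$ on the ``future'' time interval into increments that are tied back to the states via the relations \eqref{I_represent_by_N}--\eqref{RelationBetweenIandN}, mimicking the derivation of Theorem \ref{theo: two dimensional Kolmogorov forward equation}. First I would work with a regular conditional distribution $\Pro_\omega(\cdot)=\Pro(\cdot\mid\mathcal{G}_s)$ and the associated integration operator $\E_\omega[\cdot]$, so that all the manipulations become pathwise Lebesgue--Stieltjes computations; since $A_i$ is $\mathcal{G}_s$-measurable data and $a_{ikl}$ is deterministic, Fubini's theorem (and Campbell's theorem from section \ref{theo:Campbell}) will let me pull the outer $A_i(\d u_1)$-integral out of the conditional expectation and reduce the claim to computing, for fixed $u_1$,
\[
\E\Big[ I_i(u_1^-)\int\displaylimits_{[0,T]} a_{ikl}(u_1,u_2)\,N_{kl}(\d u_2)\,\Big|\,\mathcal{G}_s\Big].
\]
The product $I_i(u_1^-)N_{kl}(\d u_2)$ is then rewritten using a state--indicator identity: on $u_1>s$ we substitute $I_i(u_1^-)=I_i(s)+\sum_{j}\int_{(s,u_1)}N_{ji}(\d u_3)$ from \eqref{I_represent_by_N}, and on $u_1\le s$ we substitute $I_i(u_1^-)=I_i(u_1)=I_i(s)+\sum_j\int_{(u_1,s]}N_{ij}(\d u_3)$ from \eqref{RelationBetweenIandN} (the $\pm$-superscript bookkeeping being exactly what the $u_1^-$ versus $u_1$ distinction encodes). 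The $I_i(s)$ term produces, after taking conditional expectation, the first summand $I_i(s)\int a_{ikl}\,A_i(\d u_1)\,\tilde P_{kl}(u_2^\pm)\Lambda_{kl}(\d u_2)$ via the one-dimensional Kolmogorov relation \eqref{DefEqLambda} together with \eqref{eq: GenKolmForwardEquation}; the remaining term is a genuinely two-dimensional object $\E_\omega[\int N_{kl}(\d u_2)N_{ji}(\d u_3)\text{ or }N_{ij}(\d u_3)]$, which by definition of $Q_{ijkl}$ and the indicator trick \eqref{eq:indicator function in definition} turns into $\tilde P_{kl\cdot\cdot}(u_2^\pm,u_3^{(\pm)})\,\Lambda_{kl\cdot\cdot}(\d u_2,\d u_3)$.

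The key point requiring care is the \emph{range of the inner $u_3$-integral and the correct ordering of indices on $\Lambda$ and $\tilde P$}. In the case $u_1\le s$ (first displayed equation) the identity \eqref{RelationBetweenIandN} runs $u_3$ over $(u_1,s]$ with the jump $N_{ij}$, so after the computation one gets $\Lambda_{klij}$ and $\tilde P_{klij}$ integrated over $u_3\in[u_1,s]$, matching the stated $\int_{[0,T]\times[u_1,s]}$; since $u_3\le s$ one uses the un-shifted evaluation $u_3$ (not $u_3^-$), again consistent with the statement. In the case $u_1>s$ (second displayed equation), \eqref{I_represent_by_N} runs $u_3$ over $(s,u_1)$ with the jump $N_{ji}$ (note the reversed role: $Z$ goes from $j$ to $i$), producing $\Lambda_{klji}$ and $\tilde P_{klji}$ over $u_3\in(s,u_1)$, and here $u_3>s$ forces the left-limit evaluation $u_3^-$ — precisely the asymmetry one sees between the two displays. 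I would verify these index patterns against the definition of $\tilde P_{ijkl}$ and $\tilde\Lambda_{ijkl}$ given just before and inside Theorem \ref{theo: two dimensional Kolmogorov forward equation}, noting that the first pair of indices $(k,l)$ always sits in the ``second coordinate'' slot because it is attached to $u_2$, while the newly created pair sits in the coordinate attached to $u_3$.

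The main obstacle I anticipate is purely bookkeeping rather than analytic: getting the four-way interaction of the two case distinctions (does $u_1$ exceed $s$? does $u_2$ exceed $s$?) to line up with the $\pm$-conventions, the left-limits, and the ordering of the four subscripts on $\Lambda_{\cdot\cdot\cdot\cdot}$ and $\tilde P_{\cdot\cdot\cdot\cdot}$. A clean way to control this is to first establish the result on a generic rectangle $(a,b]\times(c,d]$ with all of $a,b,c,d$ on the same side of $s$ — exactly as \eqref{eq:indicator function in definition} was proved — and only afterwards assemble the stated integration domains by additivity of the measures; the integrability assumptions \eqref{IntegrabilCond} and \eqref{eq: finite expected value of N} guarantee all interchanges of sum, integral and conditional expectation are legitimate. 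Once the rectangle case is in place, the two displayed formulas follow by choosing the domains $[0,s]\times[0,T]$ and $(s,T]\times[0,T]$ and unwinding, which is routine.
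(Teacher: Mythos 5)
Your overall strategy is the same as the paper's: decompose $I_i(u_1^-)$ via \eqref{I_represent_by_N}/\eqref{RelationBetweenIandN}, push the $\mathcal{G}_s$-conditional expectation inside with Fubini and Campbell, and then convert the resulting $Q_{kl}$ and $Q_{klij}$ (or $Q_{klji}$) integrators into $\tilde P\cdot\Lambda$ integrators via the defining relations \eqref{DefEqLambda} and \eqref{DGL of g and Gamma}. The index patterns, the left-limit asymmetry between the $u_1\le s$ and $u_1>s$ cases, and the role of integrability in justifying interchanges are all correctly identified.

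There is, however, one genuine slip. For $u_1\le s$ you write $I_i(u_1^-)=I_i(u_1)=I_i(s)+\sum_j\int_{(u_1,s]}N_{ij}(\d u_3)$. The first equality is false: $I_i(u_1^-)$ and $I_i(u_1)$ can differ when $Z$ jumps at $u_1$, and the ``$\pm$'' convention does not help here, since it is a notational device for the processes $\tilde P$ and $\Lambda$, not a statement about $I_i$. The correct substitution, obtained by taking a left limit in \eqref{RelationBetweenIandN}, is $I_i(u_1^-)=I_i(s)+\sum_j\int_{[u_1,s]}N_{ij}(\d u_3)$, i.e.\ the closed interval $[u_1,s]$ including the possible jump at $u_1$. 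You in fact report the correct final domain $[0,T]\times[u_1,s]$, but your derivation starts from the open interval $(u_1,s]$, so the two cannot be reconciled without fixing the $I_i(u_1^-)$ step. The paper handles this cleanly by writing $I_i(t^-)=I_i(s)+\int_{W(t)}\sum_j N_{ji}^{W(t)}(\d u)$ with $W(t):=[t,s]$ for $t\le s$ and $W(t):=(s,t)$ for $t>s$, which is the bookkeeping you should adopt.
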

\begin{proof} From \eqref{I_represent_by_N}, \eqref{RelationBetweenIandN} we get \begin{align*}
		I_i(t^-)=I_i(s)+\int\displaylimits_{W(t)}\sum_{j\in\mathcal{Z}}N_{ji}^{W(t)}(d u)
	\end{align*} for $W(t):=(s,t)$ in case of  $t> s$ and $W(t):=[t,s]$ in case of  $t\leq s$ and $N_{ij}^{[t,s]}(t):=N_{ji}(t)$ and $N_{ij}^{(s,t)}(t):=N_{ij}(t)$.
  This equation and assumption \eqref{mathcalGs} imply for $D \in \{[0,s], (s,T]\}$ the equation
\begin{align*}
&\mathbb{E}\bigg[\int\displaylimits_{D\times [0,T]}I_i(u_1^-)a_{ikl}(u_1,u_2)A_{i}(\d u_1)N_{kl}(\d u_2)\bigg|\mathcal{G}_s\bigg]\\
&=\mathbb{E}\bigg[\int\displaylimits_{D\times [0,T]}{I}_{i}(s) a_{ikl}(u_1,u_2)A_i(\d u_1)N_{kl}(\d u_2)\bigg|\mathcal{G}_s\bigg]\\
&\quad+\sum_{j\in\mathcal{Z}} \mathbb{E}\bigg[\int\displaylimits_{D}\;\int\displaylimits_{[0,T]\times W(u_1)} \; {a}_{ikl}(u_1,u_2)N_{kl}(\d u_2)N_{ji}^{W(u_1)}(\d u_3) A_i(\d u_1)\bigg|\mathcal{G}_s\bigg].
\end{align*} 
Now apply Fubini's theorem and Campbell's theorem  in order to obtain
\begin{align*}
&\mathbb{E}\Big[\int\displaylimits_{D\times [0,T]}I_i(u_1^-)a_{ikl}(u_1,u_2)A_{i}(\d u_1)N_{kl}(\d u_2)\Big|\mathcal{G}_s\Big]\\
&=\int\displaylimits_{D\times [0,T]}{I}_{i}(s) a_{ikl}(u_1,u_2)A_i(\d u_1)Q_{kl}(\d u_2)\\
&\quad+\int\displaylimits_{D} \int\displaylimits_{[0,T]\times W(u_1)} {a}_{ikl}(u_1,u_2) Q_{klji}^{W(u_1)}(\d u_2, \d u_3) A_i(\d u_1),
\end{align*} for $Q_{klji}^{[0,s]}(t_1,t_2)=Q_{klij}(t_1,t_2)$ and $Q_{klji}^{(s,T]}(t_1,t_2)=Q_{klji}(t_1,t_2)$. Finally, use the equations
\begin{align*}
  Q_{kl}(\d u_2) &= \tilde{P}_{kl}(u_2^\pm) \Lambda_{kl}(\d u_2),\\
  Q_{klij}(\d u_2, \d u_3) &= \tilde{P}_{klij}(u_2^\pm,u_3^\pm) \Lambda_{klij}(\d u_2,\d u_3),
\end{align*}
which follow from the definitions \eqref{DefEqLambda} and \eqref{DGL of g and Gamma}.
\end{proof}
\begin{proposition}\label{theorem:ProspReserve3}
		Let $a_{ijkl}$, $i,j,k,l\in\mathcal{Z}$, $i\neq j$, $k\neq l$, be a measurable and bounded real-valued function on $[0,\infty)^2$.
Then we almost surely have
 	\begin{align*}	&\mathbb{E}\bigg[\;\int\displaylimits_{[0,T]^2}a_{ijkl}(u_1,u_2)N_{ij}(\d u_1)N_{kl}(\d u_2)\bigg|\mathcal{G}_s\bigg]\\
		&=\;\int\displaylimits_{[0,T]^2}a_{ijkl}(u_1,u_2)\tilde P_{ijkl}(u_1^\pm,u_2^\pm)\Lambda_{ijkl}(\d u_1,\d u_2).
	\end{align*}
\end{proposition}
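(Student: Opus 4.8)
The plan is to condition on $\mathcal{G}_s$ and push the conditional expectation inside the double integral using Fubini's theorem, so that the problem reduces to evaluating $\mathbb{E}[N_{ij}(\d u_1)\,N_{kl}(\d u_2)\mid\mathcal{G}_s]$ and then expressing the resulting measure $Q_{ijkl}$ in terms of $P_{ijkl}$ and $\Lambda_{ijkl}$. Concretely, working with a regular conditional distribution $\mathbb{P}_\omega(\cdot)=\mathbb{P}(\cdot\mid\mathcal{G}_s)$ and the associated integrator $\mathbb{E}_\omega[\cdot]$, I would first write
\[
\mathbb{E}_\omega\bigg[\int\displaylimits_{[0,T]^2}a_{ijkl}(u_1,u_2)\,N_{ij}(\d u_1)\,N_{kl}(\d u_2)\bigg]
=\int\displaylimits_{[0,T]^2}a_{ijkl}(u_1,u_2)\,Q_{ijkl}(\d u_1,\d u_2),
\]
which is exactly Campbell's theorem (section \ref{theo:Campbell}) together with Fubini, valid because $a_{ijkl}$ is bounded and measurable and $Q_{ijkl}$ is the finite signed measure associated with $(t_1,t_2)\mapsto\mathbb{E}[N_{ij}(t_1)N_{kl}(t_2)\mid\mathcal{G}_s]$ by assumption \eqref{eq: finite expected value of N}. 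This is the same manipulation already used in the proof of Theorem \ref{theo: two dimensional Kolmogorov forward equation} and in Proposition \ref{theorem:ProspReserve2}.

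Next I would substitute the defining relation of the two-dimensional forward/backward transition rates. From Definition \ref{DefLambda}, equation \eqref{DGL of g and Gamma}, one has, on each of the four quadrants determined by the position of $u_1,u_2$ relative to $s$,
\[
Q_{ijkl}(\d u_1,\d u_2)=\tilde P_{ijkl}(u_1^\pm,u_2^\pm)\,\Lambda_{ijkl}(\d u_1,\d u_2),
\]
together with the analogous identities on the mixed quadrants (the same identities that are invoked at the end of the proof of Proposition \ref{theorem:ProspReserve2}). The point is that the indicator $\mathds{1}_{\{\tilde P_{ijkl}>0\}}$ appearing in \eqref{DGL of g and Gamma} may be dropped inside the integral against $Q_{ijkl}$: by the computation \eqref{eq:indicator function in definition} in the proof of Theorem \ref{theo: two dimensional Kolmogorov forward equation}, $\{\tilde P_{ijkl}(u_1^\pm,u_2^\pm)=0\}$ is a $Q_{ijkl}$-null set, because $P_{ik}(u_1^-,u_2^-)=0$ forces $I_i(u_1^-)I_k(u_2^-)=0$ $\mathbb{P}_\omega$-almost surely and hence the corresponding $N_{ij}\otimes N_{kl}$ mass vanishes. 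Substituting and recombining the four quadrants gives
\[
\int\displaylimits_{[0,T]^2}a_{ijkl}(u_1,u_2)\,Q_{ijkl}(\d u_1,\d u_2)
=\int\displaylimits_{[0,T]^2}a_{ijkl}(u_1,u_2)\,\tilde P_{ijkl}(u_1^\pm,u_2^\pm)\,\Lambda_{ijkl}(\d u_1,\d u_2),
\]
which is the asserted identity once we recall that $\mathbb{E}[\,\cdot\mid\mathcal{G}_s]$ agrees $\mathbb{P}$-almost surely with $\mathbb{E}_\omega[\,\cdot\,]$.

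The main obstacle is not any single step but the bookkeeping around the four regions $u_m\lessgtr s$ and the measurability/integrability justification of Fubini and Campbell: one must check that the iterated integral defining the right-hand side is absolutely convergent so that the order of integration may be exchanged. Absolute convergence follows from boundedness of $a_{ijkl}$ together with the integrability assumption \eqref{IntegrabilCond} on $\Lambda_{ijkl}$ and the finiteness of the measures $P_{ijkl}$ generated by the bounded surfaces $P_{ik}$; and the regular-conditional-distribution construction described after the definitions of $P_{ik},Q_{ijkl}$ guarantees that all the pathwise Lebesgue--Stieltjes integrals are well defined simultaneously on a set of full probability. Apart from this, the argument is a direct application of the tools already assembled, so I would keep the write-up short and refer back to \eqref{eq:indicator function in definition} and to the end of the proof of Proposition \ref{theorem:ProspReserve2} for the two reused facts.
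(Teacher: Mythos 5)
Your proposal is correct and follows essentially the same route as the paper's (very terse) proof, which simply cites Campbell's theorem and equation \eqref{DGL of g and Gamma}. You unpack those two ingredients in the same order: apply Campbell's theorem under a regular conditional distribution to rewrite the conditional expectation as an integral against $Q_{ijkl}$, then use the defining relation of $\Lambda_{ijkl}$ to replace $Q_{ijkl}(\d u_1,\d u_2)$ by $\tilde P_{ijkl}(u_1^\pm,u_2^\pm)\,\Lambda_{ijkl}(\d u_1,\d u_2)$, explicitly justifying the removal of the indicator $\mathds{1}_{\{\tilde P_{ijkl}>0\}}$ via the $Q_{ijkl}$-null-set argument of \eqref{eq:indicator function in definition} (a step the paper leaves implicit here but carries out in the proof of Theorem \ref{theo: two dimensional Kolmogorov forward equation}). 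This is a faithful, slightly more detailed rendering of the intended argument.
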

\begin{proof}
	The result follows directly from Campbell's theorem and \eqref{DGL of g and Gamma}.
\end{proof}
\begin{corollary}\label{corollary:ProspReserve}
	Suppose that $A$ is a stochastic process that has  a two-dimensional  canonical representation  according to \eqref{def:DefOfB2}. Then  we almost surely have
\begin{align}\begin{split} \label{formula:ProspReserve}
			&\mathbb{E}\bigg[\;\int\displaylimits_{[0,T]^2}  A(\d u_1,\d u_2) \bigg|\mathcal{G}_s\bigg]\\
			&=\sum_{i,j\in\mathcal{Z}} \;\int\displaylimits_{[0,T]^2}P_{ij}(u_1^-,u_2^-)A_{ij}(\d u_1,\d u_2)\\
			&\quad+\sum_{\substack{i,k,l\\k\neq l}}I_i(s) \int\displaylimits_{[0,T]^2}a_{ikl}(u_1,u_2) A_i(\d u_1) \tilde{P}_{kl}(u_2^\pm) \Lambda_{kl}(\d u_2)\\
			&\quad+\sum_{\substack{i,j,k,l\\k\neq l}}\;\;\int\displaylimits_{[0,s]}\; \int\displaylimits_{[0,T] \times [u_1,s]}  {a}_{ikl}(u_1,u_2)
\tilde{P}_{klij}(u_2^\pm,u_3) \Lambda_{klij}(\d u_2,\d u_3) A_i(\d u_1)\\
			&\quad+\sum_{\substack{i,j,k,l\\k\neq l}}\;\;\int\displaylimits_{(s,T]}\; \int\displaylimits_{[0,T] \times (s,u_1)}  {a}_{ikl}(u_1,u_2)
\tilde{P}_{klji}(u_2^\pm,u_3^-) \Lambda_{klji}(\d u_2,\d u_3) A_i(\d u_1)\\
			&\quad+\sum_{\substack{i,j,k,l\\i\neq j,k\neq l}}\;\int\displaylimits_{[0,T]^2}a_{ijkl}(u_1,u_2)\tilde P_{ijkl}(u_1^\pm,u_2^\pm)\Lambda_{ijkl}(\d u_1,\d u_2).
		\end{split}
\end{align}
\end{corollary}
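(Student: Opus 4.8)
The plan is to expand the bivariate measure $A(\d u_1,\d u_2)$ into the three groups of summands appearing in the two-dimensional canonical representation \eqref{def:DefOfB2}, to pull the conditional expectation inside these finite sums by linearity, and then to evaluate each group by one of the three preceding propositions. Interchanging $\mathbb{E}[\,\cdot\,|\mathcal{G}_s]$ with the sums is harmless since $\mathcal{Z}$ is finite, and the propositions apply term by term because every component measure has almost surely finite variation: $A_{ij}$ and $A_i$ generate finite signed measures, $a_{ikl}$ and $a_{ijkl}$ are bounded, and \eqref{eq: finite expected value of N} controls the counting measures.

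The summands $\int_{[0,T]^2} I_i(u_1^-)I_j(u_2^-)A_{ij}(\d u_1,\d u_2)$ are handled directly by Proposition \ref{theorem:ProspReserve1} and produce the first line of \eqref{formula:ProspReserve}, while the summands $\int_{[0,T]^2} a_{ijkl}(u_1,u_2)N_{ij}(\d u_1)N_{kl}(\d u_2)$ are covered by Proposition \ref{theorem:ProspReserve3} and yield the last line.

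For the middle group we split the domain along the $u_1$-axis. Since $s\le T$,
\[
[0,T]^2=\big([0,s]\times[0,T]\big)\;\cup\;\big((s,T]\times[0,T]\big)
\]
is a disjoint decomposition, so each summand $\int_{[0,T]^2} I_i(u_1^-) a_{ikl}(u_1,u_2) A_i(\d u_1) N_{kl}(\d u_2)$ splits accordingly; we then apply the first identity of Proposition \ref{theorem:ProspReserve2} to the part with $u_1\in[0,s]$ and the second identity to the part with $u_1\in(s,T]$. The two $I_i(s)$-terms recombine into a single integral over $[0,T]^2$, namely the second line of \eqref{formula:ProspReserve}, whereas the two residual $\sum_{j\in\mathcal{Z}}$-terms stay separate and deliver the third line (index order $klij$, inner domain $[0,T]\times[u_1,s]$) and the fourth line (index order $klji$, inner domain $[0,T]\times(s,u_1)$). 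Collecting all contributions and summing over the index sets prescribed by \eqref{def:DefOfB2}, with the inner sum over $j\in\mathcal{Z}$ merging into the sums over $i,j,k,l$ with $k\neq l$, gives \eqref{formula:ProspReserve}.

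The argument involves no new estimate — everything is inherited from Propositions \ref{theorem:ProspReserve1}--\ref{theorem:ProspReserve3}. The only delicate point is the bookkeeping: keeping track of the index permutations and the inner integration domains exactly as they appear in the two halves of Proposition \ref{theorem:ProspReserve2}, and checking that the disjoint split of $[0,T]^2$ is compatible with the regions $\{u_1\le s\}$ and $\{u_1>s\}$ underlying the past/future case distinction in those identities.
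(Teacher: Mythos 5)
Your proposal is correct and matches the paper's (unwritten but evidently intended) proof exactly: decompose $A$ into the three groups from \eqref{def:DefOfB2}, apply linearity, invoke Propositions~\ref{theorem:ProspReserve1}, \ref{theorem:ProspReserve2}, and \ref{theorem:ProspReserve3} term by term, and for the mixed group split $[0,T]^2$ into $[0,s]\times[0,T]$ and $(s,T]\times[0,T]$ so that the two halves of Proposition~\ref{theorem:ProspReserve2} apply, with the two $I_i(s)$-pieces recombining over $[0,T]^2$. The bookkeeping of index orders ($klij$ versus $klji$) and inner domains ($[u_1,s]$ versus $(s,u_1)$) is handled correctly.
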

In insurance practice,  this formula may be used as follows:
For given one-dimensional and two-dimensional  transition rates,  calculate the corresponding one-dimensional and two-dimensional state occupation probabilities by solving \eqref{eq: GenKolmForwardEquation} and \eqref{2dimKolmForwEq}. Then solve the integrals in \eqref{formula:ProspReserve}  in order to obtain the  desired conditional expectation.

\section{Example: Conditional variance of the future liabilities}\label{Chap:5 prospective reserve}

We still think of $s$ as an arbitrary but fixed parameter and omit this parameter in the notation.
The discounted accumulated future payments $Y^+$  according to definition \eqref{def:DefOfY} are usually not measurable with respect to $\mathcal{G}_s$, so actuaries use the so-called \emph{prospective reserve} at time $s$
\begin{align}\label{V+}
  V^+&:= \E[ Y^+ | \mathcal{G}_s ]
 \end{align}
 as a proxy for $Y^+$. 
In order to quantify the dispersion of the approximation error $Y^+-V^+$, the actuary is also interested in the  conditional variance
\begin{align}\label{Var+}
 \textrm{Var}[ Y^{+} | \mathcal{G}_s ]:&= \E[ (Y^{+})^2 | \mathcal{G}_s ] - (\E[ Y^{+} | \mathcal{G}_s ])^2.
 \end{align}
This section discusses the calculation of $V^+$ and
\begin{align}\label{S+}
  S^+:= \E[ (Y^+)^2 | \mathcal{G}_s ],
 \end{align}
 from which we can then get the conditional variance as
\begin{align*}
   \textrm{Var}[ Y^{+} | \mathcal{G}_s ]=S^+ - (V^+)^2.
\end{align*}
Our results here are limited to insurance cash-flows $B$ that have a one-dimensional canonical representation,
\begin{align}
 B(t)=\sum_{i} \;\int\displaylimits_{[0,t]} I_i(u^-) B_i(\d u)+\sum_{i,j:j\neq i}\;\int\displaylimits_{[0,t]} b_{ij}(u) N_{ij}(\d u),\quad t\geq 0.
\end{align}
From \eqref{def:DefOfY} and  \eqref{ExplicFormulasV+} we almost surely obtain
\begin{align*}
  V^+ &=  \sum_{i\in\mathcal{Z}} \;\int\displaylimits_{(s,T]} \frac{\kappa(s)}{\kappa(t)} P_i(t^-)  B_i(\d t)   + \sum_{i,j \in \mathcal{Z}\atop i \neq j} \;\int\displaylimits_{(s,T]}  \frac{\kappa(s)}{\kappa(t)} b_{ij}(t)  P_i(t^-)   \Lambda_{ij}(\d t).
\end{align*}
This formula and equation \eqref{eq: GenKolmForwardEquation} allow us to calculate $V^+$ from given one-dimensional transition rates $(\Lambda_{ij}(t))_{t > s}$, $i,j \in \mathcal{Z}$.

We now turn to the calculation of $S^+$. Analogously to \eqref{def:RepOfBsquared}, one can show that
\begin{align*}
	(Y^+)^2 &= \sum_{i,j\in\mathcal{Z}}\int\displaylimits_{(s,T]^2}\frac{2 \kappa(s)^2}{\kappa(u_1)\kappa(u_2)} I_i(u_1^-)I_j(u_2^-) B_i(\d u_1) B_j(\d u_2)\\
		&\quad+\sum_{\substack{i,j,k\in\mathcal{Z}\\j\neq k}}\;\int\displaylimits_{(s,T]^2}\frac{2\kappa(s)^2}{\kappa(u_1)\kappa(u_2)} I_i(u_1^-)  b_{jk}(u_2) B_i(\d u_1)N_{jk}(\d u_2) \\
		&\quad+\sum_{\substack{i,j,k,l\in\mathcal{Z}\\i\neq j,k\neq l}}\int\displaylimits_{(s,T]^2}\frac{2\kappa(s)^2}{\kappa(u_1)\kappa(u_2)} b_{ij}(u_1) b_{kl}(u_2) N_{ij}(\d u_1) N_{kl}(\d u_2).
\end{align*}
By using definition \eqref{S+}, interchanging $\int_{(s,T]^2} $ and $\int_{[0,T]^2}  \mathds{1}_{(s,T]^2} $, and  applying Corollary \ref{corollary:ProspReserve}, we obtain that
\begin{align*}
S^+&=\sum_{\substack{i,j\in\mathcal{Z}}}\;\int\displaylimits_{(s,T]^2}\frac{2\kappa(s)^2}{\kappa(u_1)\kappa(u_2)} P_{ij}(u_1^-,u_2^-)B_i(\d u_1) B_j(\d u_2)\\
&\quad+\sum_{\substack{i,k,l\in\mathcal{Z}\\k\neq l}}I_i(s)\;\int\displaylimits_{(s,T]^2}\frac{2\kappa(s)^2}{\kappa(u_1)\kappa(u_2)} b_{kl}(u_2)P_{k}(u_2^-) B_i(\d u_1)\Lambda_{kl}(\d u_2)\\
&\quad +\sum_{\substack{i,j,k,l\in\mathcal{Z}\\ k\neq l}}\;\int\displaylimits_{(s,T]}\;\int\displaylimits_{(s,T]\times (s,u_1)}\frac{2\kappa(s)^2}{\kappa(u_1)\kappa(u_2)} b_{kl}(u_2)  P_{kj}(u_2^-,u_3^-)\Lambda_{klji}(\d u_2,\d u_3)  B_i(\d u_1)\\
&\quad+\sum_{\substack{i,j,k,l\in\mathcal{Z}\\i\neq j,k\neq l}}\quad\;\int\displaylimits_{(s,T]^2} \frac{2\kappa(s)^2}{\kappa(u_1)\kappa(u_2)} b_{ij}(u_1) b_{kl}(u_2)P_{ik}(u_1^-,u_2^-)\Lambda_{ijkl}(\d u_1,\d u_2)
\end{align*}
almost surely.
 This formula and the equations \eqref{eq: GenKolmForwardEquation} and \eqref{2dimKolmForwEq} allow us to calculate $S^+$ from given  transition rates $(\Lambda_{ij}(t))_{t \in (s,T] }$ and  $(\Lambda_{ijkl}(t))_{t\in (s,T]^2}$, $i,j,k,l \in \mathcal{Z}$.

For the conditional expectation and the conditional variance of $Y^-$ we can obtain a similar result. We leave these analogous calculations to the reader.

\section{Example: Prospective and retrospective reserve for a  path-dependent cash-flow}\label{Chap:7 retrospective calculation}

This section continues with Example \ref{ExampleFPOCashFlow}.  Recall that  \eqref{V+} is the so-called prospective reserve at time $s$. The \emph{retrospective reserve} at time $s$ is defined as
\begin{align}\label{V-}
  V^-&:= \E[ Y^- | \mathcal{G}_s ]
 \end{align}
for $Y^-$ defined by \eqref{def:DefOfY-}. Analogously to \eqref{eq: free-policy-option-cash-flow} one can show that
\begin{align*}
  Y^-&= \sum_{i\in \mathcal{S}_0}\;\int\displaylimits_{[0,s]} \frac{\kappa(s)}{\kappa(u)} I_i(u^-) C_i(\d u)+\sum_{\substack{k,l\in \mathcal{S}_0\\k\neq l}}\;\int\displaylimits_{[0,s]} \frac{\kappa(s)}{\kappa(u)}  c_{kl}(u)  N_{kl}(\d u)\\
	 &\quad +  \sum_{k \in \mathcal{S}_0}\sum_{i,l \in \mathcal{S}_1} \;\int\displaylimits_{[0,s]^2} \frac{\kappa(s)}{\kappa(u_1)}  I_i(u_1^-)\rho(u_2,k,l) C_{i}(\d u_1)   N_{kl}(\d u_2)\\
	 &\quad +\sum_{k \in \mathcal{S}_0}\sum_{\substack{i,j,l\in \mathcal{S}_1\\i \neq j}}\;\int\displaylimits_{[0,s]^2}  \frac{\kappa(s)}{\kappa(u_1)}  \rho(u_2,k,l) c_{ij}(u_1)    N_{ij}(\d u_1)  N_{kl}(\d u_2).
\end{align*}
By using \eqref{ExplicFormulasV-}, interchanging $\int_{[0,s]} $ and $\int_{[0,T]}  \mathds{1}_{[0,s]} $, interchanging $\int_{[0,s]^2}$ and $\int_{[0,T]^2}  \mathds{1}_{[0,s]^2} $, and applying Corollary \ref{corollary:ProspReserve}, we can conclude that
\begin{align*}
  V^-&= \sum_{i\in\mathcal{S}_0} \;\int\displaylimits_{[0,s]} \frac{\kappa(s)}{\kappa(u)} P_i(u^-)  C_i(\d u)   + \sum_{\substack{k,l\in\mathcal{S}_0\\ k \neq l}} \;\int\displaylimits_{[0,s]}  \frac{\kappa(s)}{\kappa(u)} c_{kl}(u)  P_l(u)   \Lambda_{kl}(\d u)\\
  &\quad+\sum_{k\in\mathcal{S}_0}\sum_{l,i\in\mathcal{S}_1}I_i(s)\int\displaylimits_{[0,s]^2}\frac{\kappa(s)}{\kappa(u_1)}\rho(u_2,k,l) P_{l}(u_2)C_i(\d u_1)\Lambda_{kl}(\d u_2)\\
  &\quad +\sum_{k\in\mathcal{S}_0}\sum_{l,i\in\mathcal{S}_1}\sum_{j\in\mathcal{Z}}\;\int\displaylimits_{[0,s]}\;\int\displaylimits_{[0,s]\times [u_1,s]}\frac{\kappa(s)}{\kappa(u_1)}\rho(u_2,k,l)P_{lj}(u_2,u_3)\Lambda_{klij}(\d u_2,\d u_3) C_i(\d u_1)\\
  & \quad + \sum_{k\in\mathcal{S}_0}\sum_{\substack{l,i,j\in\mathcal{S}_1\\i\neq j}}\;\int\displaylimits_{[0,s]^2}\frac{\kappa(s)}{\kappa(u_1)}\rho(u_2,k,l) c_{ij}(u_1)P_{jl}(u_1,u_2)\Lambda_{ijkl}(\d u_1,\d u_2)
\end{align*}
almost surely. This formula and the equations \eqref{eq: GenKolmForwardEquation} and \eqref{2dimKolmForwEq} allow us to calculate $V^-$ from given  transition rates $(\Lambda_{ij}(t))_{t \leq s}$ and  $(\Lambda_{ijkl}(t))_{t \leq s}$, $i,j,k,l \in \mathcal{Z}$.

By arguing analogously to  \eqref{eq: free-policy-option-cash-flow}, one can moreover show that the discounted future  liabilities $Y^+$ have the representation
\begin{align*}
  Y^+&= \sum_{i\in \mathcal{S}_0}\;\int\displaylimits_{(s,T]} \frac{\kappa(s)}{\kappa(u)} I_i(u^-) C_i(\d u)+\sum_{\substack{k,l\in \mathcal{S}_0\\k\neq l}}\;\int\displaylimits_{(s,T]} \frac{\kappa(s)}{\kappa(u)}  c_{kl}(u)  N_{kl}(\d u)\\
	 &\quad +  \sum_{k \in \mathcal{S}_0}\sum_{i,l \in \mathcal{S}_1} \;\int\displaylimits_{(s,T]\times [0,T]} \frac{\kappa(s)}{\kappa(u_1)}  I_i(u_1^-)\rho(u_2,k,l) C_{i}(\d u_1)   N_{kl}(\d u_2)\\
	 &\quad +\sum_{k \in \mathcal{S}_0}\sum_{\substack{i,j,l\in \mathcal{S}_1\\i \neq j}}\;\int\displaylimits_{(s,T]\times [0,T]}  \frac{\kappa(s)}{\kappa(u_1)}  \rho(u_2,k,l) c_{ij}(u_1)    N_{ij}(\d u_1)  N_{kl}(\d u_2).
\end{align*}
By applying \eqref{ExplicFormulasV+}, interchanging $\int_{(s,T]} $  and $\int_{[0,T]}  \mathds{1}_{(s,T]} $, interchanging $\int_{(s,T]\times [0,T]} $ and $\int_{[0,T]^2}  \mathds{1}_{(s,T] \times [0,T]}  $, and applying Corollary \ref{corollary:ProspReserve}, we almost surely get
\begin{align*}
				V^+&=\sum_{i\in\mathcal{S}_0} \;\int\displaylimits_{(s,T]} \frac{\kappa(s)}{\kappa(u)} P_i(u^-)  C_i(\d u)  + \sum_{\substack{i,j\in\mathcal{S}_0\\ i \neq j}} \;\int\displaylimits_{(s,T]}  \frac{\kappa(s)}{\kappa(u)} c_{ij}(u)  P_i(u^-)   \Lambda_{ij}(\d u)\\ &\quad+\sum_{k\in\mathcal{S}_0}\sum_{l,i\in\mathcal{S}_1}I_i(s) \int\displaylimits_{(s,T]\times[0,T]} \frac{\kappa(s)}{\kappa(u_1)}\rho(u_2,k,l)\tilde P_{kl}(u_2^\pm)C_i(\d u_1)\Lambda_{kl}(\d u_2)\\ &\quad+\sum_{k\in\mathcal{S}_0}\sum_{l,i\in\mathcal{S}_1}\sum_{j\in\mathcal{Z}}\int\displaylimits_{(s,T]}\;\int\displaylimits_{[0,T]\times (s,u_1)}\frac{\kappa(s)}{\kappa(u_1)}\rho(u_2,l,k)\tilde P_{klji}(u_2^{\pm},u_3^-)\Lambda_{klji}(\d u_2,\d u_3)C_i(\d u_1)\\
&\quad+\sum_{k\in\mathcal{S}_0}\sum_{\substack{l,i,j\in\mathcal{S}_1\\i\neq j}} \int\displaylimits_{(s,T]\times[0,T]}\frac{\kappa(s)}{\kappa(u_1)}\rho(u_2,k,l)c_{ij}(u_1)\tilde P_{ijkl}(u_1^-,u_2^\pm)\Lambda_{ijkl}(\d u_1,\d u_2).
\end{align*}
This formula and the equations \eqref{eq: GenKolmForwardEquation} and \eqref{2dimKolmForwEq} allow us to calculate $V^-$ from given  transition rates $(\Lambda_{ij}(t))_{t \leq T}$ and  $(\Lambda_{ijkl}(t))_{t \leq T}$, $i,j,k,l \in \mathcal{Z}$.

\section{Conclusion and outlook}\label{SectionConclusion}

So far, the non-Markov calculation technique of Christiansen \cite{christiansen2021calculation} has been limited to the calculation of first-order moments.  By introducing two-dimensional forward and backward transition rates, we make second-order moments accessible as well. The two-dimensional  rates capture intertemporal dependency structures that the one-dimensional rates miss. In order to calculate also third-order or even higher-order moments, one may envision further extensions of the forward and backward transition rate concept to three dimensions  or even higher dimensions. Such extensions are beyond the scope of this paper and left for future research.

Intertemporal dependency structures play a major role when life insurance cash-flows are path-dependent. Such path dependencies occur for example upon contract modifications.
We illustrated how two-dimensional forward and backward transition rates are of help here for actuarial calculations by studying insurance policies with free-policy option.

In insurance practice the one-dimensional and two-dimensional forward and backward transition rates have to be estimated from data. For the one-dimensional case, the landmark Nelson-Aalen estimator of Putter and Spitoni \cite{putter2018non} can be suitably adapted to do that task, see Christiansen \cite{christiansen2021calculation}. For the two-dimensional case, efficient estimators still have to be explored.

\appendix

\section{Appendix}
\begin{theorem}[Campbell theorem]\label{theo:Campbell}
Let $\eta$ be a point process on $(\mathbb{R}^d,\mathbb{B}(\mathbb{R}^d))$ with intensity measure $\lambda$ and let $u:\mathbb{R}^d\rightarrow\mathbb{R}$ be a measurable function. Then \begin{align*}
\;\int\displaylimits u(x)\eta(\d x)
\end{align*} is a random variable and \begin{align*}
\mathbb{E}\left[\;\int\displaylimits u(x)\eta(\d x)\right]=\;\int\displaylimits u(x)\lambda(\d x),
\end{align*}
whenever $u\geq0$ or $\;\int\displaylimits | u(x)| \lambda(\d x)<\infty$.
\end{theorem}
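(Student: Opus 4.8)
The plan is to prove the statement by the standard measure-theoretic induction (the ``standard machine''): establish the two assertions first for indicator functions, then extend by linearity to nonnegative simple functions, then to arbitrary nonnegative measurable functions via the monotone convergence theorem, and finally to general $u$ by splitting into positive and negative parts. Recall that a point process $\eta$ on $(\mathbb{R}^d,\mathcal{B}(\mathbb{R}^d))$ is a measurable map from $\Omega$ into the space of (locally finite) counting measures equipped with the $\sigma$-algebra generated by the evaluation maps $\mu\mapsto\mu(A)$, so $\omega\mapsto\eta(\omega,A)$ is a $[0,\infty]$-valued random variable for every $A\in\mathcal{B}(\mathbb{R}^d)$, and the intensity measure is $\lambda(A)=\mathbb{E}[\eta(A)]$.

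First I would treat $u=\mathds{1}_A$: here $\int u(x)\,\eta(\d x)=\eta(A)$ is a random variable by definition of a point process, and $\mathbb{E}[\eta(A)]=\lambda(A)=\int u\,\d\lambda$, so both claims hold. By linearity of the pathwise integral and of expectation, the claims then extend to every nonnegative simple function $u=\sum_{i=1}^{n}c_i\mathds{1}_{A_i}$ with $c_i\geq 0$, since $\sum_{i}c_i\eta(A_i)$ is a random variable with expectation $\sum_{i}c_i\lambda(A_i)=\int u\,\d\lambda$. Next, for a general nonnegative measurable $u$ I would pick nonnegative simple functions $u_n\uparrow u$ pointwise; applying monotone convergence pathwise to the integrator $\eta(\omega,\cdot)$ gives $\int u_n\,\d\eta\uparrow\int u\,\d\eta$ for each $\omega$, so $\int u\,\d\eta$ is a random variable as an increasing limit of random variables, and a second application of monotone convergence under $\mathbb{E}$ together with the simple-function case yields $\mathbb{E}\big[\int u\,\d\eta\big]=\lim_n\mathbb{E}\big[\int u_n\,\d\eta\big]=\lim_n\int u_n\,\d\lambda=\int u\,\d\lambda$. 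Finally, for general measurable $u$ with $\int|u|\,\d\lambda<\infty$, the nonnegative case applied to $|u|$ gives $\mathbb{E}\big[\int|u|\,\d\eta\big]=\int|u|\,\d\lambda<\infty$, hence $\int|u|\,\d\eta<\infty$ almost surely; on that full-probability event $\int u\,\d\eta:=\int u^{+}\,\d\eta-\int u^{-}\,\d\eta$ is well defined and is a random variable, and the nonnegative case applied separately to $u^{+}$ and $u^{-}$ (each of finite $\lambda$-integral) combined with linearity of expectation gives $\mathbb{E}\big[\int u\,\d\eta\big]=\int u^{+}\,\d\lambda-\int u^{-}\,\d\lambda=\int u\,\d\lambda$.

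I do not expect a genuine obstacle, as this is a textbook fact; the one point requiring care is the measurability bookkeeping, namely checking at each stage that $\omega\mapsto\int u(x)\,\eta(\omega,\d x)$ is $\mathcal{A}$-measurable before its expectation is spoken of, which is precisely what the layered structure of the argument supplies (indicators by definition of a point process, simple functions as finite sums, nonnegative functions as monotone limits, general functions as differences on an almost sure set). A secondary remark is that no $\sigma$-finiteness of $\lambda$ is needed here, since the proof invokes only monotone convergence for the individual measures $\eta(\omega,\cdot)$ and $\mathbb{P}$ and never a product-measure Fubini argument.
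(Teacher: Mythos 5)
Your argument is correct and is exactly the standard measure-theoretic induction used in the cited reference (Last and Penttinen, Section 2.2): indicators, nonnegative simple functions, monotone limits, then the $u^{+}-u^{-}$ split, with the measurability of $\omega\mapsto\int u\,\d\eta(\omega,\cdot)$ tracked at each stage. The paper itself only cites the source rather than giving a proof, so there is nothing further to compare; your remark that no $\sigma$-finiteness of $\lambda$ is needed (since no Fubini argument is invoked) is also accurate.
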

For the proof see section 2.2 in \cite{last2017lectures}.

\begin{proposition}[integration by parts]\label{prop: partial integration}
	Suppose that $(F(t))_{t\geq 0}$ and $(G(t))_{t\geq 0}$ are real-valued c\`{a}dl\`{a}g  processes with paths of finite variation. Then 
	\begin{align*}
		\int\displaylimits_{(a,b]}F(x)\d G(x)=F(b)G(b)&-F(a)G(a)-\int\displaylimits_{(a,b]}G(x^-)\d F(x)
	\end{align*}
	for $(a,b]\subset (0,\infty)$.
\end{proposition}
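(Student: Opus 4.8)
The plan is to argue pathwise. Fix a sample point and regard $F$ and $G$ as deterministic c\`{a}dl\`{a}g functions of finite variation on compacts; they then induce finite signed Borel measures $\mu_F,\mu_G$ on $(a,b]$ through $\mu_F((a,x])=F(x)-F(a)$ and $\mu_G((a,x])=G(x)-G(a)$, and the asserted identity becomes a statement purely about these two measures. The strategy is to compute the product-measure value $(\mu_F\otimes\mu_G)\big((a,b]\times(a,b]\big)$ in two different ways and compare. By the Jordan--Hahn decomposition (the one-dimensional counterpart of the decomposition already used in the proof of Theorem \ref{Theorem O ist kontraktion}) it suffices to establish the Fubini step for nonnegative finite measures and extend by bilinearity, so we may assume $\mu_F,\mu_G\geq 0$.

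Next I would split the square into the two disjoint Borel sets $\Delta^{<}:=\{(x,y):a<x\leq b,\ a<y<x\}$ and $\Delta^{\geq}:=\{(x,y):a<x\leq b,\ x\leq y\leq b\}$, whose union is $(a,b]^2$. Applying Tonelli's theorem to each piece, using $\mu_G((a,x))=G(x^-)-G(a)$ for the $y$-section of $\Delta^{<}$ and $\mu_F((a,y])=F(y)-F(a)$ for the $x$-section of $\Delta^{\geq}$, gives
\[
(\mu_F\otimes\mu_G)(\Delta^{<})=\int_{(a,b]}\bigl(G(x^-)-G(a)\bigr)\,\d F(x),\qquad
(\mu_F\otimes\mu_G)(\Delta^{\geq})=\int_{(a,b]}\bigl(F(y)-F(a)\bigr)\,\d G(y).
\]
On the other hand, $(\mu_F\otimes\mu_G)\big((a,b]^2\big)=\bigl(F(b)-F(a)\bigr)\bigl(G(b)-G(a)\bigr)$. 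Adding the two pieces, equating with this product, expanding the right-hand side, and using $\int_{(a,b]}\d F=F(b)-F(a)$ and $\int_{(a,b]}\d G=G(b)-G(a)$ to cancel the terms carrying $F(a)$ or $G(a)$, one is left with
\[
F(b)G(b)-F(a)G(a)=\int_{(a,b]}G(x^-)\,\d F(x)+\int_{(a,b]}F(y)\,\d G(y),
\]
which is the claimed formula after moving one integral to the other side. Since this holds for every fixed path and all integrals are the pathwise Lebesgue--Stieltjes integrals of the statement, the identity holds for all $\omega$.

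The step I expect to be the delicate one is bookkeeping at the diagonal $\{x=y\}$ together with the validity of the Fubini interchange. The diagonal is deliberately placed in $\Delta^{\geq}$; its $\mu_F\otimes\mu_G$-mass equals $\sum_{x\in(a,b]}\Delta F(x)\,\Delta G(x)$ over common atoms, and this is exactly accounted for because the integrand $F(y)$ in $\int_{(a,b]}F(y)\,\d G(y)$ is the right-continuous version (hence already includes the jump of $F$), while using $G(x^-)$ rather than $G(x)$ in the companion integral is what prevents double counting. Beyond this, one only needs measurability of $\Delta^{<}$ and $\Delta^{\geq}$ and the finiteness of $\mu_F,\mu_G$ on $[a,b]$ required for Fubini, both of which are immediate from the finite-variation assumption.
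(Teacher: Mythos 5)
Your proof is correct, and it takes a genuinely different route from the paper, which simply cites Corollary 2 following Theorem 22 in Protter's \emph{Stochastic Integration and Differential Equations}. You give a self-contained, pathwise Fubini--Tonelli argument: split $(a,b]^2$ into the strict lower triangle $\Delta^{<}$ and the closed upper triangle $\Delta^{\geq}$ (placing the diagonal in $\Delta^{\geq}$), compute each piece by slicing, and match against the product formula for the full square. This is the classical measure-theoretic proof and it is entirely elementary; in particular, it makes transparent \emph{why} one integrand carries the left limit $G(x^-)$ and the other the right-continuous $F(x)$ --- the asymmetry is exactly the bookkeeping of the diagonal $\sum_{a<x\leq b}\Delta F(x)\,\Delta G(x)$, which you identify explicitly. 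The reduction to nonnegative measures via Jordan--Hahn and bilinearity is the right way to justify Tonelli, and the finiteness needed for the unsigned Fubini step is supplied by the finite-variation hypothesis. What the paper's citation buys is brevity and a hook into the general semimartingale integration-by-parts formula $X_tY_t = X_0Y_0+\int X_{-}\,\d Y + \int Y_{-}\,\d X + [X,Y]_t$, which specializes to the stated identity when one process has finite variation; your argument buys a self-contained proof that never leaves the Lebesgue--Stieltjes world, which is arguably better suited to a paper whose integrals are all declared to be pathwise Lebesgue--Stieltjes integrals. Both are valid.
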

\begin{proof}
	For the proof see corollary 2 following theorem 22  in \cite{protter2005stochastic}.
\end{proof}
\bibliographystyle{acm}
\bibliography{mybib.bib}

\end{document}